\documentclass[10pt,journal,twoside,web]{ieeecolor}
\usepackage{generic}
\usepackage{cite}
\usepackage{amsmath,amssymb,amsfonts}
\usepackage{algorithmic}
\usepackage{graphicx}
\usepackage{textcomp}
\usepackage{hyperref}

\usepackage{dsfont}
\usepackage{graphicx}
\usepackage[ruled]{algorithm2e}
\newcommand{\gmin}{g_\textrm{min}(a^*)}

\DeclareMathOperator*{\argmax}{arg\,max}
\newtheorem{theorem}{Theorem}[section]
\newtheorem{lemma}[theorem]{Lemma}
\newtheorem{proposition}[theorem]{Proposition}

\newtheorem{definition}{Definition}

\def\BibTeX{{\rm B\kern-.05em{\sc i\kern-.025em b}\kern-.08em
    T\kern-.1667em\lower.7ex\hbox{E}\kern-.125emX}}
\begin{document}
\title{Achieving Robust Performance using Minimal Communication in Resource Allocation Games}
\author{Brandon Collins, \IEEEmembership{Member, IEEE}, Colton Hill, \IEEEmembership{Member, IEEE}, Philip N. Brown, \IEEEmembership{Member, IEEE} 
\thanks{This paragraph of the first footnote will contain the date on 
which you submitted your paper for review. Research was sponsored by the Air Force Office of Scientific Research under award number FA9550-23-1-0171 and by the Army Research Office under grant number W911NF-25-1-0239. The views and conclusions contained in this document are those of the authors and should not be interpreted as representing the official policies, either expressed or implied, of the AFOSR, Army Research Office, or the U.S. Government. The U.S. Government is authorized to reproduce and distribute reprints for Government purposes notwithstanding any copyright notation herein.}
\thanks{Brandon Collins is with the Thayer Department of Engineering, Dartmouth College, Hanover, 
NH 03755 USA (e-mail: Brandon.C.Collins@dartmouth.edu).}
\thanks{Colton Hill is with the Computer Science Department, University of Colorado Colorado Springs, Colorado Springs, CO 80918 USA (e-mail: chill13@uccs.edu).}
\thanks{Philip N. Brown is with the Computer Science Department, University of Colorado Colorado Springs, Colorado Springs, CO 80918 USA (e-mail: pbrown2@uccs.edu).}
}

\maketitle

\begin{abstract}

Increasingly, resource allocation games are being proposed to model team coordination in denied communication environments.
Of particular interest is understanding the impact of communication denial on the quality of emergent team coordination.
In this work, we consider the situation with an arbitrary communication network and use the Price of Anarchy to quantify the quality of emergent behavior. 
Our main result is a computationally efficient algorithm that calculates a minimal communication network that has the same performance guarantee as the full information case.
Additionally, we provide efficient algorithms that compute a Nash and a strict Nash equilibrium in the full information setting.
Finally to support these algorithms, we provide sufficient conditions for the existence of a strict Nash equilibrium, characterize strict Nash equilibria across all communication networks, and show that each strict Nash equilibrium in the full information case has a necessary and sufficient set of communication links that induce it.
We conclude by giving an execution time experiment of the proposed algorithm and examine several example games.
\end{abstract}

\begin{IEEEkeywords}
Game Theory, Communication Networks,  Networks of Autonomous Agents, Limited Communication
\end{IEEEkeywords}

\section{Introduction}



The widespread proliferation of autonomous agents and teams has amplified the need to understand interaction and coordination between autonomous agents.
Of particular interest is scenarios where agents may have limited or denied communication networks.
One such example of this is diaster response, where teams of autonomous aerial vehicles have been proposed for a variety of roles such as search and rescue \cite{kuntze2012seneka,steinhausler2022detection,zhang2022training,tusnio2022efficiency,arnold2018search}, communication relay \cite{eom2019uav}, wildfire monitoring \cite{alexis2009coordination,tzoumas2023wildfire,pham2018distributed}, and fire suppression \cite{innocente2019self,alsammak2023nature,john2025resource}.
However, communication during disaster situations is often challenging due to damaged infrastructure \cite{comfort2006communication}, limited bandwidth \cite{manzini2023wireless}, and harsh terrain \cite{mcrae2021utilizing}.
Therefore, it is critical to characterize effective coordination of autonomous teams with limited communication.


Although a wide variety of algorithms and approaches have been studied for this problem \cite{nayak2020experimental,schneider2015auction,otte2020auctions,sujit2012multi,rachmut2023asynchronous,su2016coordination}, in this work we consider a game-theoretic perspective.
From this point of view, the \textit{resource allocation game} is a natural and well-studied model.
In a resource allocation game, there is a set of resources (here we regard a resource as a task that a single agent is capable of completing)
and a set of agents such that each agent must decide on what resource to cover (e.g., which task to complete).
As agents select resources to cover, the agents will behave \textit{strategically}, optimizing a local utility function that depends on the collective selection of all agents.
Using this setup, a planner who must make offline decisions such as what kind of vehicles to purchase, how many to purchase, and where to station them can use the resource allocation game to understand the \textit{emergent performance} of various configurations.


We call the situation where all agents are mutually aware of each other's actions the \textit{full communication} setting.
With full communication, the problem reduces to the classically posed resource allocation game \cite{rosenthal1973class,milchtaich1996congestion}, and the emergent behavior has been extensively studied \cite{ieong2005fast,ashlagi2006resource,ashlagi2007learning}.
In contrast, in the \textit{limited communication} setting, a given agent may only receive communication from a subset of all agents.
To specify this relationship, we endow the game with a \textit{communication network}, such that a directed link from agent $i$ to agent $j$ indicates that agent $i$ communicates its current action to agent $j$.
In this way, if there is no link present between the agents, then the agent must make decisions without knowledge of the other agent's action.
Intuitively, this can cause inefficient allocations where multiple agents who are not aware of each other cover the same resource.

One relevant line of work studies the problem as distributed optimization with sequential decision updates.
Although not posed as a denied communication setting, this can be thought of as a \textit{Directed Acyclic Graph} (DAG) communication graph, where an agent only has knowledge of agents that moved before it.
These works provide performance bounds in terms of the graph-theoretic properties of the DAG \cite{gharesifard2017distributed,grimsman2018impact} and the agent's decision structure  such as parallel execution \cite{sun2020distributed}, minimal algorithmic time complexity \cite{konda2022execution}, and augmented greedy policies \cite{grimsman2022impact}.

In the cases involving strategic interaction (i.e. agents may observe each other's actions and mutually respond), it becomes necessary to define performance with respect to \textit{Nash Equilibria} (NE) or the emergent behavior of the game.
One popular metric for measuring the relative quality of the emergent behavior  is the \textit{Price of Anarchy} (PoA) \cite{koutsoupias1999worst}, which measures a ratio of the worst performing equilibrium compared to the optimal joint behavior of agents.
Using PoA, a second line of work considers the situation where agents can either see other agents or are completely blind.
The authors of~\cite{grimsman2020cost} prove a novel lower bound on PoA across classes of games that contain $k$ blind agents.
This is further refined in \cite{seaton2022all} where worst case equilibria are shown to have inherent fragility.

Recent work has begun to consider the case with arbitrary communication networks.
For example, \cite{grimsman2022valid} examines PoA in the broader class of valid utility games and \cite{brown2023robust} considers the situation where agents may redesign their utility function when faced with denied communication.
Finally, linear programs are provided in \cite{singh2025optimal} that can compute the PoA for any communication network, and to optimize the utility function.
These results are used in \cite{singh2025worst} to show that a similar fragility result of worst case equilibrium extends to arbitrary information networks.

However, the structure and properties of the Nash equilibria induced by arbitrary communication networks is not well understood.
To address this, we focus on the issue of finding a minimal communication network such that there is no performance degradation from the full communication case.
In support of this goal, we also develop supporting theory which provides insights into the structure of NE across the spectrum of communication networks.
This provides several insights for a system planner, such as what links are most important to reinforce or the likelihood that a realized communication network will provide robust performance.
Our results can be summarized as follows:
\begin{enumerate}
    \item First, we formalize basic properties of NE across communication networks (Theorems~\ref{thm:sne exists} and \ref{thm:informed ne}).
    \item Second, we give a procedure (Algorithm~\ref{algo:min communication}) that computes a game with a minimal communication network while not degrading PoA.
    \item Third, we develop and verify procedures to compute NE efficiently (Algorithms~\ref{algo:NE} and \ref{algo:SNE}).
    \item Finally, we experimentally verify the time-complexity of the proposed method and give some illustrative examples.
\end{enumerate}


\section{Model}
\label{sec:model}
In this work, we consider a single-select resource allocation game in a communication denied setting.
We model $N=\{1,2,\dots,n\}$ agents that attempt to cover a set of resources $R=\{1,2,3,\dots,m\}$.
We assume that each resource $r$ has a value $v_r$, whose vector is given by $V=(v_1,v_2,\dots,v_m)$ where $v_i > 0$ for all $r\in R$.
We say that the \textit{action set} or the set of resources that agent $i$ can cover is written as $A_i\subseteq R$, and we write the collective set of actions by all agents as $A=A_1\times A_2\times \dots A_n$.
We use the term \textit{single-select} to mean that any action $a_i$ covers exactly one resource (that is, $a_i\in R$).
We denote a specific selection of actions as $a\in A$, and we write $a$ as $a=(a_i,a_{-i})$, where $a_i\in A_i$ is a specific coverage of agent $i$, and $a_{-i}\in A_{-i}$ is the collective actions of all agents $j\in N\setminus\{i\}$.

We denote the count of the number of agents that cover a resource $r\in R$ as $|a|_r=\sum_{i\in N}  \mathds{1}(r = a_i)$, where $\mathds{1}(\cdot)$ denotes the indicator function, and we denote the subset of resources being covered in action profile $a$ as $R(a)=\{r\in R : |a|_r\geq 1\}$.
Next, we define the set coverage basis function as follows:
\begin{equation} \label{eq:basis}
    b(|a|_r)=\begin{cases}
        1 & |a|_r \geq 1 \\
        0 & |a|_r = 0,\\
    \end{cases}
\end{equation}
which simply indicates if each resource is being covered by at least one agent.
Building on this, we define a \textit{welfare function}, which measures the aggregate performance of the agents:
\begin{equation} \label{eq:welfare}
    W(a) = \sum_{r\in R} v_r b(|a|_r).
\end{equation}
The welfare function simply sums the value of each resource if that resource was covered by at least one agent.
This is the \textit{objective function} of the system planner, the entity who seeks to optimize the aggregate performance of the agents.
However, each agent makes decisions independently based on available information. 
In addition to knowing the values of the resources $V$, the agents also receive some information about the actions of their peers.

We model a communication denied environment as each agent being able to receive communications from a subset of all agents $N$.
We denote the set of agents that agent $i$ can receive communications from as $\mathcal{N}_i\subseteq N$, and we say that communication from $j\in N$ to $i\in N$ was \textit{denied} if $j\notin \mathcal{N}_i$.
We always assume that an agent can see itself, that is, $i\in \mathcal{N}_i$, and we specify all communication links as $\mathcal{N}=(\mathcal{N}_1,\mathcal{N}_2,\dots,\mathcal{N}_n)$.
We term the special case where $\mathcal{N}_i=N$ for all $i\in N$ as the \textit{complete information} case, where all agents can receive communication from all other agents.
For a given collective choice of actions $a$, we compactly notate which actions can be seen by agent $i$ as
\begin{equation} \label{eq:info j to i}
    a_{j\rightarrow i}=\begin{cases}
        a_j & j\in \mathcal{N}_i \\
        \emptyset & j \notin \mathcal{N}_i
    \end{cases},
\end{equation}
 which we use to define the knowledge agent $i$ has about other agents as $\tilde{a}_{-i}=(a_{1\rightarrow i},a_{2\rightarrow i},\dots,a_{i-1\rightarrow i},a_{i+1\rightarrow i}\dots,a_{n\rightarrow i})$.

Using this information structure, we use the \textit{marginal contribution} utility function for each agent $i$:
\begin{equation} \label{eq:utiltiy}
\begin{aligned}
    U_i(a) &= W(a_i,\tilde{a}_{-i})-W(\tilde{a}_{-i}) \\
    &=\sum_{r\in R} v_r\mathds{1}(|a_i|_r-|\tilde{a}_{-i}|_r=1).
\end{aligned}
\end{equation}
This is the local utility for each agent $i$ and the maximization of this function is the only goal of each agent.
We can specify a full instance of a \textit{Communication Denied Resource Allocation Game} as $g=(N,A,R,V,\mathcal{N},U)$.
In this work, we consider the class of games $\mathcal{G}$ that are single-select ($A_i\subseteq R$), use the set cover basis function \eqref{eq:basis}, and marginal cost utility \eqref{eq:utiltiy}, while permitting the rest of the parameters ($N,A,R,V,\mathcal{N}$) to be arbitrary.
We denote the space of all such games as $G$, and $\bar{G}$ as the subset of games with full information (that is, $\mathcal{N}_i=N$ for all $i\in N$).
We refer to the subclass of games that have the same structure $(N,A,R,V)$ as a full information game $\bar{g}\in \bar{G}$ but with any communication network $\mathcal{N}$ as $G(\bar{g})$.

To characterize the emergent behavior of a game, we use the notion of NE.
A state $a^*\in A$ is an NE if for each agent $i\in N$,
\begin{equation} \label{eq:NE}
    U_i(a^*_i,a^*_{-i})\geq U_i(a'_i,a^*_{-i})
\end{equation}
for any unilateral action deviation $a'_i\in A_i$.
Similarly, in the case where the inequality is strict, $a^*$ is a \textit{Strict Nash Equilibrium} (SNE) if
\begin{equation}\label{eq:SNE}
    U_i(a^*_i,a^*_{-i})> U_i(a'_i,a^*_{-i})
\end{equation}
for any unilateral action $a'_i\in A_i$ for all agents $i$.
We let NE$(g)$ and SNE$(g)$ be the set of all Nash equilibria and strict Nash equilibrium for any game $g\in G$. 
In this setting, NE need not exist in a game with any given communication structure, and we provide sufficient conditions for the existence of Nash in Section~\ref{sec:sne algo}.
In the same section, we also provide sufficient conditions for the existence of SNE in the full information setting.

Nash equilibrium are often thought of as the long run behavior of a game after some learning process where agents continuously strategically react to each other's actions.
However, a game may have many equilibria and it is often difficult to predict which equilibrium (or set of equilibrium) agents may settle on.
Regardless of this complexity, it is desirable to have metrics that measure the quality of emergent behavior of a game.

One way to measure the quality of emergent behavior is the Price of Anarchy.
The PoA measures the welfare of the worst equilibrium compared to the action profile $a^*$ that maximizes the welfare function.
Formally, for a single game instance, this can be described as
\begin{equation}
    \textrm{PoA}(g) =\frac{\min_{a\in \textrm{NE}(g)}W(a)}{\max_{a\in A}W(a)}.
\end{equation}
The PoA is often interpreted as the worst case cost of distributed decision making.
In this work, we focus on SNE which gives a slight refinement of classical PoA:
\begin{equation}
    \textrm{sPoA}(g) =\frac{\min_{a\in \textrm{SNE}(g)}W(a)}{\max_{a\in A}W(a)}.
\end{equation}
Because $\textrm{SNE}(g)\subseteq \textrm{NE}(g)$, it must be $\textrm{sPoA}(g)\geq \textrm{PoA}(g)$, giving an improvement in some games.

\section{Computing a Minimal Communication Game}
\label{sec:min}

Suppose a system planner considers the full communication game $\bar{g}\in \bar{G}$ such that $\bar{g}$ is an arbitrary single-select resource allocation game, with marginal contribution utility functions and the set cover welfare function.
The system designer seeks to understand the performance of a communication denied game $g$, where $g\in G(\bar{g})$ has the same structure as $\bar{g}$ but with denied communication.
To address this, we develop an algorithm that computes the minimal communication graph such that $\textrm{NE}(g)=\{a^*\}$ for some $a^*\in \textrm{SNE}(\bar{g})$, which is sufficient to ensure $\textrm{sPoA}(g)\geq \textrm{sPoa}(\bar{g})$.
This approach can be thought of as computing the minimal amount of communication required such that agents exhibit similar strategic behavior to the full communication case.

Before computing the minimal communication games, we give two assumptions which we will use throughout the work.
Although individual results may only need one of the assumptions, the pipeline of algorithms presented will ultimately require both, and thus we are primarily concerned with games that satisfy both of them.
\begin{itemize}
    \item \textbf{Assumption A1} (\textit{non-triviality of agents}): There exists an action profile $a\in A$ such that $|a|_r\leq1$ for all $r\in R$,
    \item \textbf{Assumption A2} (\textit{distinct resources}): For any unique $a_i,a'_i\in A_i$ it must be $v_{a_i}\neq v_{{a_i'}}$.
\end{itemize}
Intuitively, Assumption~A1 ensures that there exists at least one action profile $a\in A$ such that no agents are choosing redundant actions.
If this assumption were not true, then in every $a\in A$ there would be at least one agent $i$ whose choice would have no impact on the welfare function, that is $W(a)=W(a_{-i})$.
In this case, the game $g$ could be reduced by removing such an agent $i$ iteratively until A1 is satisfied.
Assumption A2 simply requires that all agents have a strict preference over their available resources in the case when no other agent is covering any of them.

In the first, we show that Assumptions A1, A2 are sufficient to ensure that SNE exists in full communication games.
\begin{theorem}\label{thm:sne exists}
    Let game $\bar{g}\in \bar{G}$ have full communication and satisfy Assumptions A1, A2. 
    Then $|\textrm{SNE}(\bar{g})|\geq1$.
\end{theorem}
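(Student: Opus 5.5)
In the full information game $\bar{g}$, the marginal contribution utility makes $W$ an exact potential: for every $i$, $a_i'$ and $a_{-i}$,
\[
U_i(a_i',a_{-i})-U_i(a_i,a_{-i}) = W(a_i',a_{-i})-W(a_i,a_{-i}),
\]
since $W(\tilde{a}_{-i})=W(a_{-i})$ when $\mathcal{N}_i=N$. The strategy is therefore to exhibit a welfare-maximizing profile at which every unilateral deviation strictly lowers $W$. Such a profile is an SNE by~\eqref{eq:SNE}.

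Plain maximizers of $W$ need not work, because a duplicated agent may be indifferent between two redundant choices. I would use Assumption A1 to avoid this. Let $\mathcal{A}^1=\{a\in A: |a|_r\le 1 \ \forall r\}$, which is nonempty by A1. On $\mathcal{A}^1$ every agent covers a distinct resource, so $W(a)=\sum_{i} v_{a_i}$. Choose $a^*\in\argmax_{a\in\mathcal{A}^1} W(a)$.

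First I would show that $a^*$ maximizes $W$ over all of $A$. Take any $a\in A$. For each covered resource $r\in R(a)$, pick one agent covering it. The remaining agents are redundant, and the question is whether they can be reassigned to distinct uncovered resources to produce a profile in $\mathcal{A}^1$ with welfare at least $W(a)$.

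This step is the main obstacle. The difficulty is a matching argument. Consider the bipartite graph between agents and resources induced by the sets $A_i$. By A1 it has an agent-perfect matching $M_0$, namely any profile in $\mathcal{A}^1$. The set of agents matched in $a$ is only a partial matching, which must be extended without losing any of its covered resources. By the standard augmenting-path argument, the symmetric difference with $M_0$ lets one grow a matching covering all agents whose set of covered resources contains the original set $R(a)$. This is the matroid property that the covered-resource sets of matchings form a matroid, i.e. a transversal matroid: any matched vertex set extends within a maximum matching. So some $a'\in\mathcal{A}^1$ has $R(a')\supseteq R(a)$, giving $W(a')\ge W(a)$. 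Hence $W(a^*)=\max_A W$.

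Next I would verify strictness at $a^*$. Fix $i$ and any $a_i'\neq a_i^*$ in $A_i$, and split into two cases.

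If $a_i'$ is covered by some other agent in $a^*$, then the deviation uncovers $a_i^*$, since $|a^*|_{a_i^*}=1$, and adds nothing. So $W$ drops by $v_{a_i^*}>0$.

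If $a_i'$ is uncovered in $a^*$, then the deviated profile lies in $\mathcal{A}^1$, and its welfare is $W(a^*)-v_{a_i^*}+v_{a_i'}\le W(a^*)$ by the choice of $a^*$. By A2, $v_{a_i'}\neq v_{a_i^*}$, so the inequality is strict.

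In both cases the potential identity gives $U_i(a^*)>U_i(a_i',a^*_{-i})$, so $a^*\in\textrm{SNE}(\bar{g})$.

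The first step, that $a^*$ maximizes $W$ over all of $A$, is not actually needed for the conclusion; the argument only needs $a^*$ to maximize over $\mathcal{A}^1$ and the two-case check. I would therefore drop it or keep it merely as a remark, so the real obstacle disappears. The remaining care is just confirming the potential identity under full information, where $\tilde{a}_{-i}=a_{-i}$.
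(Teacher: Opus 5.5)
Your argument is correct and is essentially the paper's proof. The paper assumes no SNE exists and shows that every informed profile has a deviation to another informed profile strictly increasing $f(a)=\sum_{i\in N}U_i(a)$, which equals $W(a)=\sum_i v_{a_i}$ on informed profiles; this contradicts finiteness and is just the contrapositive of your two-case check that a maximizer of $W$ over $\mathcal{A}^1$ is strict. Your transversal-matroid step showing global optimality is valid but, as you note, unnecessary, and the paper does not use it.
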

The proof of Theorem~\ref{thm:sne exists} appears in the appendix.
From a technical point of view, the immediate outcome of this theorem is that $\textrm{sPoA}(\bar{g})$ is well defined, since there must be at least one strict Nash equilibrium.

With the knowledge that SNE always exist, we now investigate their properties, beginning with a definition:
\begin{definition}[Informed Action]
    An action profile $a\in A$ is \textit{informed} if $|a|_r\leq 1$ for all $r\in R$.
\end{definition}
Intuitively, this definition describes the class of all action profiles such that no agent double covers any resource.
We use the term \textit{informed} in the sense that agents appear to be aware of the actions of other agents (in a given communication network this may not be nominally true) and are using that information to strategically make choices that improve the overall welfare function.
In the following theorem, we connect informed actions with the strict Nash equilibrium of the full information game (SNE$(\bar{g})$).
\begin{theorem} \label{thm:informed ne}
        Let full communication game $\bar{g}\in \bar{G}$ satisfy Assumptions A1 and A2,  then $a^*\in \textrm{SNE}(\bar{g})$ if and only if $a^*\in\cup_{g\in G(\bar{g})} \textrm{NE}(g)$ and $a^*$ is informed.
\end{theorem}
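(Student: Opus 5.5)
The plan is to prove the two directions separately. Both rest on one observation about informed profiles. If $a$ is informed, then every agent $i$ covers $a_i$ alone. So in \emph{every} communication network $\mathcal{N}$ we have $|\tilde{a}_{-i}|_{a_i}=0$ and hence, by \eqref{eq:utiltiy}, $U_i(a)=v_{a_i}$. The network only changes what an agent earns after a deviation.

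\textbf{($\Rightarrow$).} Let $a^*\in\textrm{SNE}(\bar{g})$. Since $\bar{g}\in G(\bar{g})$ and $\textrm{SNE}(\bar{g})\subseteq\textrm{NE}(\bar{g})$, membership in $\cup_{g\in G(\bar{g})}\textrm{NE}(g)$ is immediate. It remains to show that $a^*$ is informed.

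Suppose, for contradiction, that $|a^*|_r\geq 2$ for some $r$. Under full information, every agent on $r$ sees another agent on $r$ and so earns utility $0$.
\begin{itemize}
\item If some such agent $i$ has an alternative $a_i'\in A_i\setminus\{a_i^*\}$, then $U_i(a_i',a^*_{-i})\geq 0=U_i(a^*)$, which violates \eqref{eq:SNE}.
\item Otherwise every agent on $r$ has $A_i=\{r\}$. Then at least two agents are forced onto $r$ in every profile, which contradicts Assumption A1.
\end{itemize}
This step is the main subtlety. Strictness is vacuous for agents with singleton action sets, and A1 is exactly what excludes that case.

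\textbf{($\Leftarrow$).} Let $a^*$ be informed with $a^*\in\textrm{NE}(g)$ for some $g\in G(\bar{g})$ with network $\mathcal{N}$. Fix an agent $i$ and a deviation $a_i'\neq a_i^*$. Write $U_i^g$ and $\bar{U}_i$ for utilities in $g$ and $\bar{g}$.

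First compare the two deviation payoffs. In $\bar{g}$, $\bar{U}_i(a_i',a^*_{-i})$ equals $v_{a_i'}$ if no other agent covers $a_i'$, and $0$ otherwise. In $g$, agent $i$ sees only a subset of the other agents, so $|\tilde{a}_{-i}|_{a_i'}\leq |a^*_{-i}|_{a_i'}$. Therefore
\begin{equation*}
U_i^g(a_i',a^*_{-i})\geq \bar{U}_i(a_i',a^*_{-i}).
\end{equation*}

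Next combine this with the NE condition in $g$ and the opening observation:
\begin{equation*}
\bar{U}_i(a^*)=v_{a_i^*}=U_i^g(a^*)\geq U_i^g(a_i',a^*_{-i})\geq \bar{U}_i(a_i',a^*_{-i}).
\end{equation*}

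Finally, upgrade this to a strict inequality by cases on the right-hand side.
\begin{itemize}
\item If $\bar{U}_i(a_i',a^*_{-i})=0$, strictness follows from $v_{a_i^*}>0$.
\item Otherwise $\bar{U}_i(a_i',a^*_{-i})=v_{a_i'}$. The chain gives $v_{a_i^*}\geq v_{a_i'}$, and Assumption A2 gives $v_{a_i'}\neq v_{a_i^*}$, so the inequality is strict.
\end{itemize}
Hence $a^*\in\textrm{SNE}(\bar{g})$.
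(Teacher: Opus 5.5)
Your proof is correct and follows essentially the same route as the paper: in the forward direction, a doubly covered agent earns $0$ and can deviate weakly, with A1 excluding the case where every agent on that resource has a singleton action set; in the backward direction, you use the same chain $\bar{U}_i(a^*)=U^g_i(a^*)\geq U^g_i(a_i',a^*_{-i})\geq \bar{U}_i(a_i',a^*_{-i})$ followed by A2 for strictness. The only differences are cosmetic: you fold the paper's separate cases $g=\bar{g}$ and $g\neq\bar{g}$ into one, and you make explicit the split between a zero deviation payoff (strict because $v_{a_i^*}>0$) and a payoff of $v_{a_i'}$ (strict by A2), which the paper's first case treats somewhat loosely.
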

The proof of Theorem~\ref{thm:informed ne} appears in the appendix.

This theorem yields a few insights into communication denied settings.
First in the forward direction, it characterizes that all of the SNE of a full information game $\bar{g}$ are informed.
That is, if an equilibrium $a^*\in\textrm{NE}(g)$ in a communication denied game $g$ is uninformed, that inefficiency can be directly attributed to the communication denial.
In the backward direction, it gives that SNE$(\bar{g})$ is the complete set of all informed equilibria across all communication profiles.
Specifically, it is not possible to construct a communication network that induces an informed Nash equilibrium that is not part of the SNE in the fully informed setting.

\begin{algorithm}
\label{algo:min communication}
\caption{Compute a Minimal Communication Game}
\KwData{Full information game $\bar{g}\in G$ that satisfies Assumptions A1 and A2.}
\KwResult{$g_{\min}(a^*)\in G(\bar{g})$ with minimal communication with respect to some $a^*\in \textrm{SNE}(\bar{g})$}
$a^*\in \textrm{SNE}(\bar{g})$\;
\For {$i\in N$}{
    $\mathcal{N}_i\gets \emptyset$\;
    \For {$j \in N$}{
        \If{$a^*_j\in A_i$ and $v_{a^*_j}>v_{a^*_i}$}{
            $\mathcal{N}_i\gets \mathcal{N}_i\cup \{j\}$
        }
    }
}
\Return $g_{\min}(a^*)\in G(\bar{g})$ with communication structure $\mathcal{N}$\;
\end{algorithm}

\begin{figure}
    \centering
    \includegraphics[scale=0.25]{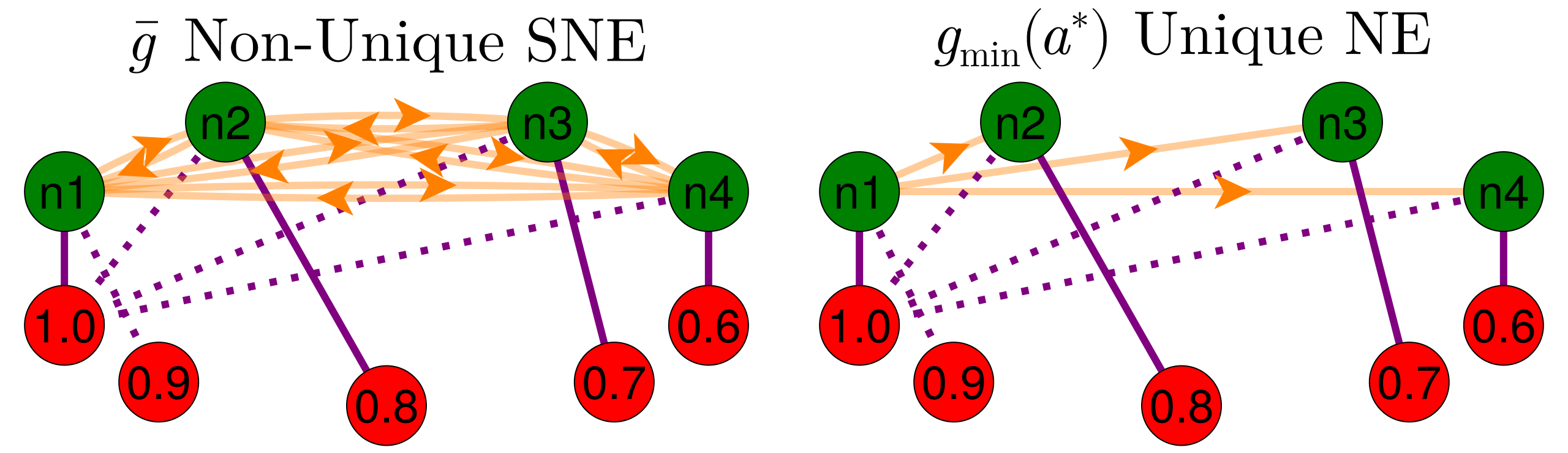}
    \caption{
    Examples Of Algorithm~\ref{algo:SNE} and Algorithm~\ref{algo:min communication}.
    The green circles indicate the agents, $N=\{n1,n2,n3,n4\}$ and the red circles indicate the resources, where the number in the middle is the value $v_r$.
    The orange lines show communication links in the direction information flows, formally an arrow from agent $i$ to $j$ indicates $i\in \mathcal{N}_{j}$.
    An agent's action set is indicated in purple, where the dotted arrows an agent's available but not taken actions, and the solid line indicates a particular selection $a_i\in A_i$.
    The left diagram shows an example strict Nash equilibrium $a^*$, shown as the solid purple lines, computed via Algorithm~\ref{algo:SNE} (taking $\bar{g}$ as input) and the right diagram shows $\gmin$, output of Algorithm~\ref{algo:min communication} (taking $\bar{g}$ and $a^*$ as input).
    }
    \label{fig:star}
\end{figure}

Using these results, Algorithm~\ref{algo:min communication} gives the procedure to compute a minimum communication game, and an example is given on the right side of Figure~\ref{fig:star}.
The algorithm iterates over the agents in a double nested loop and adds a communication link $j\in \mathcal{N}_i$ whenever $a^*_j\in A_i$ and $v_{a^*_j}>v_{a^*_j}$.
Intuitively, this ensures that any time an agent would desire to switch their action from $a^*_i$ to some $a_i$ (because $v_{a^*_i}>v_{a^*_i}$), the agent is informed that some agent $j$ is covering it as $a^*_j=a_i$, and therefore $U_i(a_i,a^*_{-i})=0$.
This structure also implies that the time complexity of Algorithm~\ref{algo:min communication} is $\mathcal{O}(n^2)$.
We call the computed game a \textit{minimal communication game} and denote it by $\gmin$.
Although seemingly a simplistic process for constructing a communication structure, such a game has several interesting properties:
\begin{theorem}
\label{thm:gmin}
    Let $\bar{g}\in \bar{G}$ be a full information game that satisfies Assumptions A1 and A2, and let $\gmin\in G(\bar{g})$ be the result of Algorithm~\ref{algo:min communication} with respect to $a^*\in \textrm{SNE}(\bar{g})$.
    Letting $\mathcal{N}$ denote $\gmin$'s communication structure, the following statements are all true:
    \begin{enumerate}
        \item $\textrm{NE}(\gmin)=\{a^*\}$,
        \item $\textrm{sPoA}(\gmin)\geq \textrm{sPoA}(\bar{g})$ and $\textrm{PoA}(\gmin)\geq \textrm{PoA}(\bar{g})$,
        \item if game $g'\in G(\bar{g})$ has $\mathcal{N}_i\nsubseteq \mathcal{N}'_i$ for some $i \in N$ then $a^*\notin \textrm{SNE}(g')$, and
        \item if game $g''\in G(\bar{g})$ has $\mathcal{N}_i\subseteq \mathcal{N}''_i$ for all $i \in N$ then $a^*\in \textrm{SNE}(g'')$.
    \end{enumerate}
\end{theorem}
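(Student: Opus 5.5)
The plan rests on two structural facts about $a^*\in\textrm{SNE}(\bar g)$. First, by Theorem~\ref{thm:informed ne}, $a^*$ is informed, so in any communication structure agent $i$ never sees another agent on $a^*_i$; hence $U_i(a^*)=v_{a^*_i}$ whenever no observed agent covers $a^*_i$. Second, we need the \emph{covering property}: for every $i$ and every $r\in A_i$ with $v_r>v_{a^*_i}$ there is some $j\neq i$ with $a^*_j=r$. This holds because otherwise, in $\bar g$, deviating to $r$ would give $U_i=v_r>v_{a^*_i}=U_i(a^*)$, contradicting strictness. By informedness this $j$ is unique. By construction of Algorithm~\ref{algo:min communication}, this $j$ lies in $\mathcal{N}_i$. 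We also note that $\mathcal{N}_i$ contains only agents $j$ with $v_{a^*_j}>v_{a^*_i}$, and $i$ itself is harmless since $v_{a^*_i}\not>v_{a^*_i}$. Finally, by Assumption A2, any $r\in A_i\setminus\{a^*_i\}$ has either $v_r>v_{a^*_i}$ or $v_r<v_{a^*_i}$.

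We prove statement 4 first, since statement 1's existence half is its special case $g''=\gmin$. Take $\mathcal{N}_i\subseteq\mathcal{N}''_i$ and compare agent $i$'s options:
\begin{itemize}
\item At $a^*$, agent $i$ gets $v_{a^*_i}$ by informedness.
\item A deviation to $r\in A_i$ with $v_r>v_{a^*_i}$ lands on a resource covered by $j\in\mathcal{N}_i\subseteq\mathcal{N}''_i$, so it yields $0<v_{a^*_i}$.
\item A deviation to $r$ with $v_r<v_{a^*_i}$ yields at most $v_r<v_{a^*_i}$.
\end{itemize}
Hence $a^*\in\textrm{SNE}(g'')$.

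For statement 3, pick $j\in\mathcal{N}_i\setminus\mathcal{N}'_i$. Then $a^*_j\in A_i$ and $v_{a^*_j}>v_{a^*_i}$. Since $a^*$ is informed, $j$ is the only agent on $a^*_j$, and $j$ is unseen by $i$ in $g'$. So deviating to $a^*_j$ yields $v_{a^*_j}>v_{a^*_i}\ge U_i(a^*)$, and $a^*$ is not even an NE of $g'$.

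The main obstacle is uniqueness in statement 1: every $a\in\textrm{NE}(\gmin)$ equals $a^*$. I would argue by strong induction over the distinct values $\{v_{a^*_i}\}$ in decreasing order. Values may tie across agents, so agents at the same value level are handled simultaneously; this works because $\mathcal{N}_i$ only contains agents at strictly higher levels.

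The inductive hypothesis is that every agent $j$ with $v_{a^*_j}>v_{a^*_i}$ plays $a_j=a^*_j$. Then agent $i$'s observed set is exactly $\{j: a^*_j\in A_i,\ v_{a^*_j}>v_{a^*_i}\}$, all playing their $a^*_j$. Since these agents sit on distinct, strictly higher-valued resources, none of them is on $a^*_i$, and choosing $a^*_i$ gives $v_{a^*_i}>0$. By the covering property every $r\in A_i$ with $v_r>v_{a^*_i}$ is seen as covered and gives $0$, while lower resources give less. Thus $a^*_i$ is agent $i$'s unique best response, which forces $a_i=a^*_i$.

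The base case (top level, where $\mathcal{N}_i\setminus\{i\}=\emptyset$) is the same argument with empty hypothesis. There, the covering property shows that $a^*_i$ is the maximum-value element of $A_i$, since any higher resource would have to be covered by an agent at a strictly higher level.

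Statement 2 then follows directly. Since $\textrm{NE}(\gmin)=\{a^*\}$ and $a^*$ is strict by statement 4, $\textrm{SNE}(\gmin)=\{a^*\}$ as well. So both $\textrm{PoA}(\gmin)$ and $\textrm{sPoA}(\gmin)$ equal $W(a^*)/\max_a W(a)$. In $\bar g$ we have $a^*\in\textrm{SNE}(\bar g)\subseteq\textrm{NE}(\bar g)$, so both minima there are at most $W(a^*)$, and the optimum in the denominator is the same for both games.
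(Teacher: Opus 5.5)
Your proposal is correct and follows essentially the same route as the paper. Existence and claims 3--4 come from informedness of $a^*$ together with the algorithm's link rule, uniqueness comes from strong induction in decreasing order of $v_{a^*_i}$, and claim 2 comes from $a^*\in\textrm{SNE}(\bar g)\subseteq\textrm{NE}(\bar g)$. Your explicit covering property and level-by-level induction are in fact more careful than the paper's write-up, which sorts agents with strict inequalities (ignoring value ties across agents) and asserts $v_{a^*_{i^1}}=\max_{r\in R}v_r$ in the base case, whereas only maximality over $A_{i^1}$ holds and is needed.
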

The proof of Theorem~\ref{thm:gmin} appears in the appendix.

In summary, the first result gives that $a^*$ is the unique Nash equilibrium of $\gmin$.
The second result of the theorem directly gives the desired performance result.
Therefore, Algorithm~\ref{algo:min communication} can be viewed as an efficient algorithm that computes a minimal communication structure such that the emergent behavior of the game does not deteriorate from the full information setting.
This may be useful in a variety of circumstances, such as the setting where a designer can plan the communication network, decide which links to reinforce or repair, or rapidly evaluate outcomes in a wide range of potential emergent communication networks.

The third result is the reason for calling $\gmin$  \textit{minimal}.
That is, if some game $g'$ is missing even a single communication link from $\gmin$, then $a^*$ is not an equilibrium in $g'$.
Intuitively, every communication link of $\gmin$ is \textit{necessary} for the equilibrium $a^*$ to be present in any game, regardless of the rest of the communication structure. 
The fourth result gives that if a game $g''\in G(\bar{g})$ has strictly more communication, (i.e. $g'\in G(\bar{g})$ such that  $\mathcal{N}_i\subseteq \mathcal{N}'_i\ \forall i\in N$), then $a^*$ must be a strict Nash equilibrium in that game.
Equivalently, the communication links of $\gmin$ are \textit{sufficient} to ensure that any game with at least that subset of communication will have equilibrium $a^*$.
Together, the third and fourth results show that the set of communication links in $\gmin$ are necessary and sufficient such that $a^*$ is a strict Nash equilibrium in any game $g\in G(\bar{g})$.

We use the necessity and sufficiency of the third and fourth results to partition the space of communication networks in $G(\bar{g})$.
For this discussion, let the set of all minimal communication games for a full communication game $\bar{g}$ be denoted by $G_{\min}(\bar{g})=\{\gmin: a^*\in \textrm{SNE}(\bar{g})\}$.
We call game $g\in G(\bar{g})$ \textit{high-information} if there exists $g'\in G_{\min}(\bar{g})$ such that $\mathcal{N}_i\subseteq \mathcal{N}'_i\ \forall i\in N$.
Notably, this is a sufficient condition for the existence of NE which is convenient for NE based analysis approaches such as PoA.
Conversely, if no such $g'\in G_{\min}(\bar{g})$ exists then we say the game is \textit{low-information}.
Game $g$ need not have an equilibrium, but if it does it must be uninformed via Theorems~\ref{thm:informed ne} and \ref{thm:gmin}.

\section{Computing a Nash Equilibrium}
\label{sec:ne}

As Algorithm~\ref{algo:min communication} requires an SNE as input, we develop a general procedure for computing SNE efficiently.
This procedure takes two steps: first we find a Nash equilibrium and then we use a best response search process to find a SNE based on a NE.
In this section, we give the provide the algorithm to efficiently compute an NE, and in the next section we give the procedure to find an SNE.

Because each minimum communication structure is associated with a specific SNE and in this approach we compute SNE's from an NE, we begin by designing an algorithm to compute NE quickly.
In general, it is well understood that in potential games, Nash equilibria always exist \cite{monderer1996potential} and asynchronous best response processes converge to them \cite{marden2012revisiting}.
All full communication games $\bar{g}\in \bar{G}$ are potential games because the marginal contribution utilities \eqref{eq:utiltiy} enables the welfare function \eqref{eq:welfare} to double as a potential function.
Additionally, the topic of learning a Nash equilibrium in (full information) single-select resource allocation game has been studied in \cite{ieong2005fast}, which provides lower and upper bounds of the number of learning updates to learn a Nash equilibrium.
Compared to our model, \cite{ieong2005fast} makes the assumption that $A_i=R$ for all agents $i$, which is termed a \textit{symmetric} game.
Particularly, it provides a lower bound of $\Omega(\min(n^2,nm))$ as the fewest number of better response updates such that an equilibrium can be learned by agents.

We design Algorithm~\ref{algo:NE} as a two iteration asynchronous strict best response process with time complexity $\mathcal{O}(n|A_*|)$, where we denote the size of the largest action set of any agent as $|A_*|=\max_{i\in N}(|A_i|))$.
Notably, our algorithm recovers complexity $\mathcal{O}(nm)$ in the case where $|A_*|=m$, and offers improved complexity in the case where agents have smaller action sets.
Although~\cite{ieong2005fast} does not provide bounds for the non-symmetric case, we highlight that our approach is known to be optimal in the case of symmetric single-select games.
We consider the augmented action space $\tilde{A}_i=A_i\cup\{0_i\}$ where virtual resource $0_i$ is a private resource that only $i$ can cover that results in negative welfare $W(0_i)=-\infty$.
Note, $U_i(a_i,a_{-i})>U_i(0_i,a_{-i})$ for all $a_i\in A_i$ and $a_{-i}\in A_{-i}$, so all agents will best respond off of this action in the first iteration of Algorithm~\ref{algo:NE}, giving $a^1,a^2\in A$.

\begin{algorithm}
\caption{Find a Nash Equilibrium}
\label{algo:NE}
\KwData{Full information game $\bar{g}\in G$ that satisfies Assumption A2}
\KwResult{$a^*\in\textrm{NE}(\bar{g})$}
$a^0\gets (0_1,0_2,\dots,0_n)$\;
\For{$t\in \{1,2\}$}{
    $a^t\gets a^{t-1}$\;
    \For{$i\in N$}{
        $a'_i\in \argmax_{a_i\in \tilde{A}_i}U_i(a_i,a^t_{-i})$\;
        \If{$U_i(a'_i,a^t_{-i})>U_i(a^t_i,a^t_{-i})$}{
        $a^t_i\gets a'_i$\;
        }
    }
}
\Return $a^2$
\end{algorithm}
We now verify that Algorithm~\ref{algo:NE} returns a Nash equilibrium.
We begin by giving a lemma that gives a useful relationship between $a^1$ and $a^2$ of this algorithm.

\begin{lemma}\label{thm:monotinicity}
    If $a^1,a^2\in A$ are the action profiles resulting from Algorithm~\ref{algo:NE} on $\bar{g}\in \bar{G}$ such that $\bar{g}$ satisfies Assumption A2 then $R(a^1)\subseteq R(a^2)$.
\end{lemma}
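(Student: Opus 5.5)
Throughout, $a^1$ denotes the profile after the first sweep ($t=1$) and $a^2$ the profile after the second sweep; inside the second sweep, $a^{2,k}$ is the profile after agents $1,\dots,k$ have been processed, so $a^{2,0}=a^1$ and $a^{2,n}=a^2$. I will prove the stronger claim that coverage never shrinks during the second sweep: $R(a^{2,k-1})\subseteq R(a^{2,k})$ for every $k$. Chaining these inclusions gives $R(a^1)\subseteq R(a^2)$.

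Fix agent $k$. If it does not switch, nothing changes. Suppose it switches from $a^{1}_k=r$ to some $r'\neq r$. The only resource that could leave the covered set is $r$, and only if $k$ was its sole coverer. So it suffices to show that in the second sweep an agent never abandons a resource it covers alone.

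The key step is to understand $r=a^1_k$. In the first sweep agent $k$ best responded, using strict improvement from $0_k$, to the profile $a^{1,k-1}$. In that profile agents $j<k$ hold their first-sweep actions and agents $j>k$ are still at their private resources $0_j$. Under marginal contribution, $U_k(s,\cdot)=v_s$ if $s$ is uncovered by the others and $0$ otherwise. So $r$ maximizes $v_s$ over $s\in A_k\setminus R(a^{1,k-1}_{-k})$. If that set is empty, every payoff is $0$ and $k$ still leaves $0_k$, landing on some already-covered resource, so $k$ is not a sole coverer. By A2 this maximizer is unique.

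Now consider agent $k$'s decision in the second sweep, and assume it is the sole coverer of $r$, so its current utility is $v_r$. A strictly better alternative $r'$ must be uncovered by the others in $a^{2,k-1}$ and satisfy $v_{r'}>v_r$. Since $r$ was the best uncovered resource for $k$ in the first sweep, any such $r'\in A_k$ was covered at that time by some agent $j<k$, with $a^1_j=r'$. I then argue that $r'$ is still covered in $a^{2,k-1}$, by induction on the sweep order, and this induction is the main obstacle.

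The plan for the induction is to maintain two invariants during the second sweep: no resource in $R(a^1)$ becomes uncovered, and every earlier agent $j<k$ either keeps $a^1_j$ or moves only when $a^1_j$ is also covered by someone else. For $j<k$ there are two cases. If $j$ was the sole coverer of $r'$, the same argument, applied inductively to $j$, shows $j$ never leaves. If $j$ was not the sole coverer, then some other agent also covers $r'$, and that agent is also earlier-fixed or appeared later in the first sweep. The delicate subcase is the one where $j$ leaves $r'$ relying on a later agent $\ell>k$ that covered $r'$ in $a^1$. Here I would use that $\ell$ has not yet moved in the second sweep when $k$ decides, since the sweep proceeds in index order, so $\ell$ still covers $r'$ at that point.

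Combining the two cases, $r'$ remains covered when $k$ decides, so $k$ has no strict improvement away from a resource it solely covers. Hence no resource is lost at step $k$, and the inclusion follows. The one point to check carefully is that A2 makes the first-sweep maximizer unique, which is exactly what rules out ties that could let $k$ drift away.
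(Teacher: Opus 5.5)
Your route matches the paper's in substance: the paper assumes a resource is lost and notes that the abandoning agent must have moved to a higher-valued $r'$ that was covered in the first sweep and later vacated by a sole coverer. It iterates this into an impossible chain $v_r<v_{r'}<v_{r''}<\cdots$. You use the same two facts but terminate by strong induction on the position in the second sweep, which is cleaner.

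There is, however, a step that is false as written. When $A_k\setminus R(a^{1,k-1}_{-k})=\emptyset$ you conclude that $k$ ``is not a sole coverer,'' and from then on you only treat the case where $a^1_k$ was the best uncovered resource. That conclusion holds in $a^1$, but not at $k$'s turn in the second sweep. Lower-indexed agents sharing $a^1_k$ see $k$ there, have utility $0$, and may leave first. For example, take $A_1=\{x,y\}$, $A_2=\{z,w\}$, $A_3=\{x,z\}$ and $v_x>v_z>v_y>v_w$. The first sweep may return $a^1=(x,z,z)$. In the second sweep agent $2$ moves to $w$, so agent $3$ becomes the sole coverer of $z$ with the higher-valued alternative $x\in A_3$. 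This case is therefore not vacuous, and your argument does not currently cover it.

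The repair is immediate. In this case every $r'\in A_k$ already lies in $R(a^{1,k-1}_{-k})\subseteq R(a^1)$, which is exactly the property you use in the other case.

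The step you call the main obstacle also closes at once if you take your first invariant, $R(a^1)\subseteq R(a^{2,k-1})$, as the strong induction hypothesis. Any $r'\in A_k\cap R(a^1)$ with $r'\neq a^1_k$ is then covered in $a^{2,k-1}$ by an agent other than $k$, so it yields utility $0$ and cannot be a strict improvement. The case analysis over $j$ and $\ell$ is therefore unnecessary, and its subcase where every other first-sweep coverer of $r'$ has index below $k$ is only gestured at.

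Finally, A2 is not what prevents drift. Second-sweep moves require strict improvement. Any $r'\in A_k$ with $v_{r'}>v_{a^1_k}$ that was uncovered at $k$'s first-sweep turn would have beaten $a^1_k$ there regardless of ties. So uniqueness of the first-sweep maximizer plays no role in this lemma.
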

The proof of Lemma~\ref{thm:monotinicity} appears in the appendix. 
The above lemma gives some structure on the emergent action profile as the best response process in Algorithm~\ref{algo:NE} proceeds.
Particularly, if an agent is the only one covering a resource, it will never switch off that resource.
Using this lemma, we give Proposition~\ref{thm:find Nash}, which verifies Algorithm~\ref{algo:NE} indeed returns a Nash equilibrium.

\begin{proposition} \label{thm:find Nash}
    Let $\bar{g}\in G$ be a full information game that satisfies assumption A2, then running Algorithm 1 on $\bar{g}$ returns $a^*\in \textrm{NE}(\bar{g})$.
\end{proposition}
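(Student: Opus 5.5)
The plan is to track the set of covered resources step by step through Algorithm~\ref{algo:NE} and show that, once an agent has taken its turn in the second pass, it remains best responding for the rest of that pass. Under full information the marginal contribution utility \eqref{eq:utiltiy} reduces to $U_i(a)=v_{a_i}$ if no other agent covers $a_i$, and $U_i(a)=0$ otherwise. So a strictly improving move always lands on a resource that no other agent currently covers (the virtual resources $0_i$ are excluded, since they are never improvements).

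\textbf{Step 1 (covered set only grows).} I would first prove a step-by-step strengthening of Lemma~\ref{thm:monotinicity}: at every individual update in either pass, the set of covered real resources never shrinks. An agent that switches either leaves a resource still covered by someone else, or it is the sole coverer of $a_i$. In the second case it switches only to an uncovered $r$ with $v_r>v_{a_i}$. The key is to rule this second case out, which is Step 2.

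\textbf{Step 2 (sole coverers never move in the second pass).} Consider agent $j$ at its first-pass turn. It picked the highest-valued resource in $A_j$ that was uncovered at that moment; by A2 this choice is unique. Every higher-valued resource in $A_j$ was then covered. I would argue by induction over the update sequence, using Step 1 inductively, that those higher-valued resources remain covered. Then at $j$'s second-pass turn, if $j$ is still alone on $a_j$, no resource in $A_j$ with larger value is uncovered, so $j$ does not move. This inductive interleaving of Steps 1 and 2 is the main obstacle. I would handle it with a single induction on the global update index, carrying the joint invariant: ``no sole coverer has switched, and the covered set is monotone.''

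\textbf{Step 3 (best responses persist).} From Steps 1 and 2, every second-pass mover was sharing its resource (utility $0$) and moves onto an uncovered resource. Such a move covers a new resource without leaving any resource uncovered, and it never joins a resource that some agent covers alone.

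Now fix agent $i$ right after its second-pass turn; at that moment $i$ is best responding. Consider the two possible situations.
\begin{itemize}
\item If $i$ is alone on $a^2_i$, nobody later joins $a^2_i$. The higher-valued alternatives in $A_i$ were covered and stay covered, so $i$ remains best responding.
\item If $i$ shares $a^2_i$, best responding with utility $0$ means every resource in $A_i$ is covered. This stays true, and if $i$'s partners later leave, its utility only rises to $v_{a^2_i}$ while all alternatives remain covered.
\end{itemize}
Hence, at the end of the second pass, every agent satisfies \eqref{eq:NE}, so $a^2\in\textrm{NE}(\bar{g})$.
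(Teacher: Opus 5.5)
Your proposal is correct and follows essentially the paper's route: the covered set never shrinks (Lemma~\ref{thm:monotinicity}, which you obtain in a stronger step-by-step form by forward induction on the update index, where the paper uses the ascending chain $v_r<v_{r'}<v_{r''}<\cdots$), and then each agent's second-pass choice remains a best response, which your Step~3 shows directly where the paper argues by contradiction over the cases $U_i(a^2)>0$ and $U_i(a^2)=0$. One harmless slip: if every resource in $A_j$ is already covered at $j$'s first-pass turn, then $j$'s choice (all options tied at utility $0$) is neither ``the highest-valued uncovered'' resource nor unique; the only fact you actually use, namely that every resource in $A_j$ worth more than $v_{a^1_j}$ was covered at that moment, still holds, so the uniqueness remark (your only appeal to A2) can simply be dropped.
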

The proof of Proposition~\ref{thm:monotinicity} appears in the appendix.
This Theorem provides the ability to quickly find NE for the next steps of the algorithm.
Notably, the algorithm can also be adapted to find multiple NE by changing the order of best responses.

\section{Computing A Strict Nash Equilibrium}
\label{sec:sne algo}

Using the results characterizing the structure and existence and SNE, we seek to efficiently compute them.
We will begin with an NE (computed via Algorithm~\ref{algo:NE}) and search for a sequence of weak best responses such that the agents cover a superset of resources which will lead to an SNE.
To do this, we will reduce the best response problem to a graph theoretic shortest path problem, where each edge in the graph will correspond to a best response by an agent.
The main advantage of this approach is that the search for a best response sequence can be handled by well studied and efficient graph theoretic algorithms, and we only need to show that such a path exists.
We begin by providing a restricted definition of best response processes.
\begin{definition}[Weak Best Response Sequence]
A Weak Best Response Sequence is a series of action profiles of length $T\leq n$ given by  $(a^0,a^1,a^2,\dots, a^T)$ such that:
\begin{enumerate}
    \item (\textit{Asynchronous Updates}) For every pair of consecutive profiles $a^t,a^{t+1}$ there must exist a unique \textit{updating} agent $i$ such that $a^t_{-i}=a_{-i}^{t+1}$ and $a^t_{i}\neq a_{i}^{t+1}$.
    \item (\textit{Single Update per Agent}) Each agent only updates their action once. If agent $i$ updates their action at time $t$, then  $a^0_i=a^1_i=\dots=a^{t-1}_{i}$ and $a^t_i=a^{t+1}_i=\dots=a^{T}_{i}$.
    \item (\textit{Consecutive Updates}) If agent $i$ changes their action at time $t$ such that $a^{t-1}_i\neq a^t_i$, then an agent $j$ with $a^0_j=a^t_i$ must be the updating agent at $t+1$, that is $a^{t+1}_{j}\neq a^{t}_{j}$.
    If no such agent $j$ exists then $t=T$ and the sequence terminates. 
    \item (\textit{Weak Best Response}) Any agent $i$ that updates their action $a^{t}_i\neq a^{t+1}_i$ must weakly improve their utility $U_i(a^{t+1})\geq U_i(a^{t})$.
\end{enumerate}
\end{definition}

The above definition restricts potential weak best response sequences such that only one agent may update at a time (this property is often called \textit{asynchronous}), and an agent can only wake up for an update if at the previous timestep an agent took its current action.
Additionally, each agent is only permitted to update their action once, must change their selection from their current choice, and must make a choice that weakly improves their utility.
This definition restricts the number of possible sequences (and therefore the search space).



Using this definition, we seek to find the best response sequence that $(a^0\cdots a^T)$ such that $R(a^0)\subset R(a^T)$, meaning that the best response sequence causes the agents to find a more efficient allocation, formally $W(a^T)>W(a^0)$.
To do this, we partition the agents into the following three sets:
\begin{align}
    &S(a)=\{i\in N : |a|_{a_i}\geq 2\} \label{eq:source}\\
    &C(a)=\{i\in N : |a|_{a_i}= 1\textrm{ and } |a|_{r}\geq 1 \quad \forall r\in A_i\} \label{eq:connecting}\\
    &D(a)=\{i\in N : |a|_{a_i}= 1 \textrm{  and }\exists r\in A_i\textrm{ s.t. }|a|_r=0\} \label{eq:destination}
\end{align}
where if $a\in\textrm{NE}(\bar{g})$ and $\bar{g}$ satisfies assumptions A1,
then the above sets are disjoint and $S(a)\cup C(a)\cup D(a)=N$.
Going set by set, agents $s\in S(a^0)$ implies $U_s(a^0)=0$, and because $a^0\in \textrm{NE}(\bar{g})$, it must be $U_s(a_s,a^0_{-s})=0$ for all $a_s\in A_s$.
Therefore, we call them \textit{source} nodes in the sense that they are free to best respond to any $a_i\in A_i$ without any action by other agents.
The significance of this property is that they can begin a weak best response sequence.
Agents $c\in C(a^0)$ are called \textit{connecting} agents.
The key difference is that they require some agent $i$ to select $a_i=a^0_c$, and then similarly to $s\in S(a)$, they are free to best respond to any action $a_c \in A_c$.
Finally, nodes $d\in D(a^0)$ are termed \textit{destination} nodes.
Intuitively, if some agent $i$ selects $a_i=a^0_d$, then agent $d$ will be free to select action $a_d$, and cause a super set of actions to be covered, as desired.
Using these definitions, we now show that such best response processes always exists.
We now give the result that in a weak Nash equilibrium, weak best response sequences always exist.
\begin{theorem} \label{thm:SWBRP}
    Let game $\bar{g}\in \bar{G}$ has full information and satisfies Assumption A1. 
    If $a^0\in \textrm{NE}(\bar{g})$ but $|a^0|<n$ then there exists a weak best response sequence $(a^0\cdots a^T)$ such that $R(a^0)\subset R(a^T)$ and $a^T\in \textrm{NE}(\bar{g})$.
\end{theorem}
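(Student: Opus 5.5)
We read $|a^0|<n$ as $|R(a^0)|<n$, i.e. fewer than $n$ distinct resources are covered. The plan is to reduce the problem to an augmenting-path argument on an auxiliary directed graph over the agents. We then take a shortest path from a source agent to a destination agent and read the weak best response sequence directly off that path.

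\textbf{Step 1 (graph and existence of a path).} Define a digraph $H$ on $N$ with an arc $i\to j$ whenever $j\neq i$, $a^0_j\in A_i$, and $|a^0|_{a^0_j}=1$. Along such an arc, agent $i$ moves onto $j$'s resource and evicts $j$. Because $|R(a^0)|<n$, some resource is double covered, so $S(a^0)\neq\emptyset$.

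To show that a path from $S(a^0)$ to $D(a^0)$ exists, we use matchings in the bipartite agent--resource graph $\{(i,r): r\in A_i\}$.
\begin{itemize}
\item Let $M_0$ match each resource in $R(a^0)$ to one agent covering it in $a^0$, so $|M_0|=|R(a^0)|<n$.
\item By Assumption~A1 there is an informed profile, giving a matching $M^\ast$ of size $n$.
\item By Berge's lemma, $M_0\triangle M^\ast$ contains an $M_0$-augmenting path. It starts at an agent unmatched by $M_0$, hence in $S(a^0)$, and ends at a resource uncovered in $a^0$.
\end{itemize}
Each intermediate resource on this path is matched in $M_0$ to an agent covering it. If such a resource is covered twice, we restart the path at the other covering agent, who is also a source. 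After this truncation we obtain a path $s=i_0\to i_1\to\cdots\to i_k$ in $H$ in which $i_k$ has an uncovered resource in $A_{i_k}$. Among all such paths we choose a shortest one. Minimality then forces $i_1,\dots,i_{k-1}\in C(a^0)$, since otherwise the path could be cut earlier, and $i_k\in D(a^0)$ (or $k=0$ with $s$ itself able to reach an uncovered resource, which NE rules out).

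\textbf{Step 2 (the sequence satisfies the definition).} Set $a^t$ by letting $i_{t-1}$ switch to $a^0_{i_t}$ for $t=1,\dots,k$. In the last step $i_k$ switches to the highest-valued resource in $A_{i_k}$ that is uncovered in $a^0$, so $T=k+1\le n$.

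\begin{itemize}
\item \emph{Asynchronous, single-update, and consecutive-update conditions.} These hold by construction: the agents on the path are distinct, and each next updater is exactly the agent whose $a^0$ resource was just taken.
\item \emph{Weak improvement for $s$.} We have $U_s(a^0)=0$ because $s$ is a source.
\item \emph{Weak improvement for intermediate agents.} Each $i_t$ with $t\ge1$ has utility $0$ just before its update, since its resource is now doubly covered, so any move is weakly improving.
\item \emph{Strict gain and coverage.} The final move by $i_k$ gains strictly, so $R(a^T)=R(a^0)\cup\{r_{\mathrm{new}}\}\supset R(a^0)$.
\end{itemize}

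\textbf{Step 3 (NE of $a^T$).} This is the main obstacle. The key observation is that the set of uncovered resources only shrinks. The one exception is $s$'s old resource, which stays covered because it was double covered. We check the agents case by case.
\begin{itemize}
\item \emph{Agents sharing a resource in $a^T$.} Such an agent was already sharing in $a^0$, or is the co-coverer of $s$'s old resource. Sources in $a^0$ have no uncovered resource in their action sets, because $a^0$ is an NE, so they have no profitable deviation.
\item \emph{Unmoved sole coverers.} Their utility is weakly higher than in $a^0$, while their deviation options are a subset of the old ones.
\item \emph{Moved intermediate agents.} Each is now the sole coverer of an evicted agent's resource. Being in $C(a^0)$, it has no uncovered resource in its action set, and nothing new becomes uncovered.
\item \emph{Agent $i_k$.} It picked the best uncovered option available to it.
\end{itemize}
Hence $a^T\in\textrm{NE}(\bar g)$. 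The delicate points are making the truncation and minimality argument in Step~1 rigorous, and checking that no deviation gains positive value onto a resource vacated along the path. All such vacated resources are re-covered in the next step, except the one taken by $r_{\mathrm{new}}$, which was never vacated.
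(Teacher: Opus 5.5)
Your proof is correct, and its skeleton matches the paper's. The informed profile from A1 guides an eviction chain that starts at a source. Every agent but the last moves onto the next agent's $a^0$-resource (a zero-to-zero move), the last agent (a destination) takes an uncovered resource, and the NE check rests on the covered set only growing.

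The difference is in how the chain's existence is proved. The paper runs a round process $\rho(t)$ that follows $a^*$ from an arbitrary $s\in S(a^0)$. It bounds $T$ via $|\rho(t+1)|\ge|\rho(t)|$ together with the claim that each agent updates only once. That claim is not automatic: the chain can return to $a^0_s$ and re-enter $s$, and if $a^*_s=a^0_s$ the first ``update'' is a no-op. So the paper's termination step is loose.

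Your symmetric-difference argument on $M_0\triangle M^\ast$ produces a source-to-uncovered-resource alternating path outright. Truncating at the last doubly covered resource and then taking a shortest path in $H$ forces every intermediate agent into $C(a^0)$. That makes the single-update and consecutive-update conditions immediate and keeps the NE verification clean.

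What the paper's route additionally asserts is that \emph{any} prescribed source can start the sequence, which the SNE-finding algorithm relies on when it picks an arbitrary $s$. Your argument only yields \emph{some} source. That suffices for the theorem, and the stronger claim can actually fail, e.g.\ for a source whose only available action is its own doubly covered resource.

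Two small repairs are needed. First, in the truncation, restart at the $M_0$-partner of the last doubly covered intermediate resource; that agent is itself a source, whereas a different co-coverer need not have the next resource in its action set. Second, in Step~3, handle $s$ explicitly: it is a moved agent not in $C(a^0)$, but it still has no uncovered resource in $A_s$ because $U_s(a^0)=0$ and $a^0$ is an NE.
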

The proof of Theorem~\ref{thm:SWBRP} appears in the appendix.

Although in the remainder of this section we will utilize Theorem~\ref{thm:SWBRP} to compute a strict Nash equilibrium, in isolation it serves as an equilibrium selection argument in favor of SNE.
That is, from any weak Nash equilibrium there exists a sequence of best response actions such that $R(a^0)\subset R(a^T)$, improving the welfare.
Because the new action profile $a^T$ is also a Nash equilibrium, the theorem can be reapplied and the agents can repeat this process once again.
Eventually, $|R(a)|=n$ and by Theorem~\ref{thm:informed ne} it must be that $a$ is a strict Nash equilibrium, at which point no agent can use a best response to change their action.
Therefore, if agents update their action in any process that permits weak best responses, strict Nash equilibrium will be the natural stable emergent outcome.
This provides formal justification for using $\textrm{sPoA}$ as the primary equilibrium quality metric, in place of the traditional notion of PoA.


Now it is known that there always exists a weak best response sequence, all that remains is finding one in as few computations as possible.
Unfortunately, the proof of Theorem~\ref{thm:SWBRP} depends on knowing an informed action profile $a\in A$, which itself needs to be computed.
Instead, we will reformulate the problem to a graph theoretic shortest path problem.
Formally, we define a best response graph for a Nash equilibrium $a^0\in \textrm{NE}(g)$ as $\gamma(a^0)=(N,E)$ where the agents $N$ are the nodes and edges $(i,j)\in E$ exist whenever $a_j\in A_i$ and $i\notin D(a^0)$.
We say that a path on graph $\gamma(a)$, is a sequence of nodes $(i^1,i^2,\dots)$ such that every consecutive pair of nodes $(i^t,i^{t+1})$ satisfies $(i^t,i^{t+1})\in E$.
We interpret this definition as moving from node $i^t$ to $i^{t+1}$ on graph $\gamma(a)$ corresponds to agent $i^t$ updating their action to agent $i^{t+1}$'s original action $a^0_{i^{t+1}}$.
This creates new action profile $a^t=(a^0_{i^{t+1}},a^{t-1}_{-i^t})$.

We highlight that any path on $\gamma(a^0)$ can be mapped to a valid weak best response sequence $(a^0,a^1,a^2,\dots)$.
This follows because (i) each step in the graph only features one update, (ii) each node is accessed only once in a path meaning each node only updates once in the sequence, (iii) the edge structure ensures the consecutive update property is satisfied, and (iv) the weak best response is sastisfied as $i\in S(a^0)\cup C(a^0)$ are free to weakly best respond to any action, and the process terminates when a node $j\in D(a^0)$ updates.
This argument can be given similarly the other way that any weak best response sequence can be uniquely interpreted as a path on $\gamma(a^0)$.

\begin{algorithm}
\label{algo:SNE}
\caption{Find a Strict Nash Equilibrium}
\KwData{Full information game $\bar{g}\in G$ that satisfies Assumptions A1 and A2} 
\KwResult{$a^*\in\textrm{SNE}(\bar{g})$}
$a^*\in \textrm{NE}(\bar{g})$\;
\While{$|R(a^*)|<n$}{
    Compute $\gamma(a^*)$\;
    Select any $s\in S(a^*)$\;
    $(a^0,a^1\cdots a^T)\gets\textrm{BFS}(\gamma(a^*),s)$\;
    $a^*\gets a^T$\;
}
\Return $a^*$\;
\end{algorithm}

Now that we have reduced Weak Best Response Sequences to an equivalent graph representation, and Theorem~\ref{thm:SWBRP} gives that such a path must exist, all that remains is to deploy a graph-theoretic shortest path algorithm.
In this work, we use a single source, multi-destination variant of Breadth First Search (BFS) which is known to terminate in $\mathcal{O}(n+|E|)$ where $|E|\leq n|A_*|$, and thus must terminate in $\mathcal{O}(n|A_{*}|)$.
For convenience, we write the Breadth First Search Algorithm as $\textrm{BFS}(\gamma(a^0),s)=(a^0,a^1\dots a^T)$, where $s\in S(a^0)$ is the starting node.
For convenience, we define its return as a best response process, using the previous definition of $\gamma(a^0)$ to convert the path back into a weak best response sequence.

The process for computing a strict Nash equilibrium is given as Algorithm~\ref{algo:SNE}.
The algorithm begins with a full communication game $\bar{g}\in\bar{G}$ as input first utilizes Algorithm~\ref{algo:NE} to compute a Nash equilibrium $a^*$.
Then, the algorithm loops whenever $|R(a^*)|<n$ where this condition implies $a^*$ is uninformed and by Theorem~\ref{thm:informed ne} cannot be an SNE of $\bar{g}$.
In each loop, it computes $\gamma(a^*)$, which will terminate $\mathcal{O}(n|A_*|)$ as each agents action set needs to be iterated over to determine if there should be an edge or not.
Then, it selects $s\in S(a^*)$ arbitrarily and then executes $\textrm{BFS}(\gamma(a^*),s)$ and updates $a^*$ to be $a^T$ from BFS.
An example of a computed strict Nash equilibrium can be seen on the left side of Figure~\ref{fig:star}.

The key insight of this algorithm is that Theorem~\ref{thm:SWBRP} gives $a^T\in \textrm{NE}(\bar{g})$ which allows the algorithm to restart the process, each time improving $R(a^*)$.
Then, within $n$ iterations it must be that $|R(a^*)|=n$, which is a sufficient condition such that $a^*$ is informed, which by Theorem~\ref{thm:informed ne} gives $a^*\in \textrm{SNE}(\bar{g})$ as desired.
Further, because the algorithm pessimistically takes $n$ iterations, and both the inner computations $\gamma(a)$ and $\textrm{BFS}(\gamma(a),s)$ have time complexity $\mathcal{O}(n|A_*|)$, Algorithm~\ref{algo:SNE} has time complexity $\mathcal{O}(n^2|A_*|)$.
Note, if Algorithm~\ref{algo:min communication} is called using an SNE computed from Algorithm~\ref{algo:SNE} (which itself calls Algorithm~\ref{algo:NE}), then Algorithm~\ref{algo:SNE}'s time complexity will dominate, resulting in a final time complexity of $\mathcal{O}(n^2|A_*|)$ for the full algorithm stack presented in this paper.

As a final point, we highlight that our Algorithm~\ref{algo:SNE} is only one way to find SNE for Algorithm~\ref{algo:min communication}.
Recent work, \cite{kleer2017potential,Bilo2025Minimizing} provides efficient (in the single-select case) methods to compute the maximum of Rosenthal's potential function \cite{rosenthal1973class}, which is well known to be the potential function of congestion games (equivalently resource allocation games).
This is relevant because in this setting, the potential function can be reduced to the welfare function $W$, so the \textit{socially optimal} $a^*\in \argmax_{a\in A}W(a)$ can be efficiently computed.
Then, we have that $a^*\in \textrm{NE}(\bar{g})$ because $a^*$ is a maximizer of the potential function, which lets us begin to reason about $a^*$ using the results of this paper.
Particularly, Theorem~\ref{thm:SWBRP} gives it must be that $a^*$ is informed, or there would exist an Nash equilibrium $a$ such that $R(a^*)\subset R(a)$, which would imply $W(a)>W(a^*)$, contradicting optimality of $a^*$.
Because $a^*$ is an informed Nash equilibrium, it therefore must also be an SNE of $\bar{g}$ by Theorem~\ref{thm:informed ne}, which implies $a^*$ is a valid input of Algorithm~\ref{algo:min communication} and $\gmin$ is well defined.
This implies that $\textrm{PoA}(\gmin)=1$ as $a^*$ is its unique equilibrium.
Counter intuitively,  if a system designer who optimizes for PoA can control the communication network $\mathcal{N}$, they should prune the communication links in $\bar{g}$ until $\gmin$ is obtained.

In contrast to these approaches that find the optimal NE, we highlight that our Algorithms~\ref{algo:NE} and \ref{algo:SNE} are easy to adapt to search for many SNE (for example, by changing the behavior of the BFS search algorithm).
This is useful in denied communication settings, as in many applications such as natural disasters or enemy jamming the communication networked is realized in an online and unpredictable fashion.
In this way, our work can be used to inform a designer what a minimal set of efficient communication structures is.

\section{Examples and Simulations}
\label{sec:ex}

\begin{figure}[!ht]
    \centering
    \includegraphics[scale=0.25]{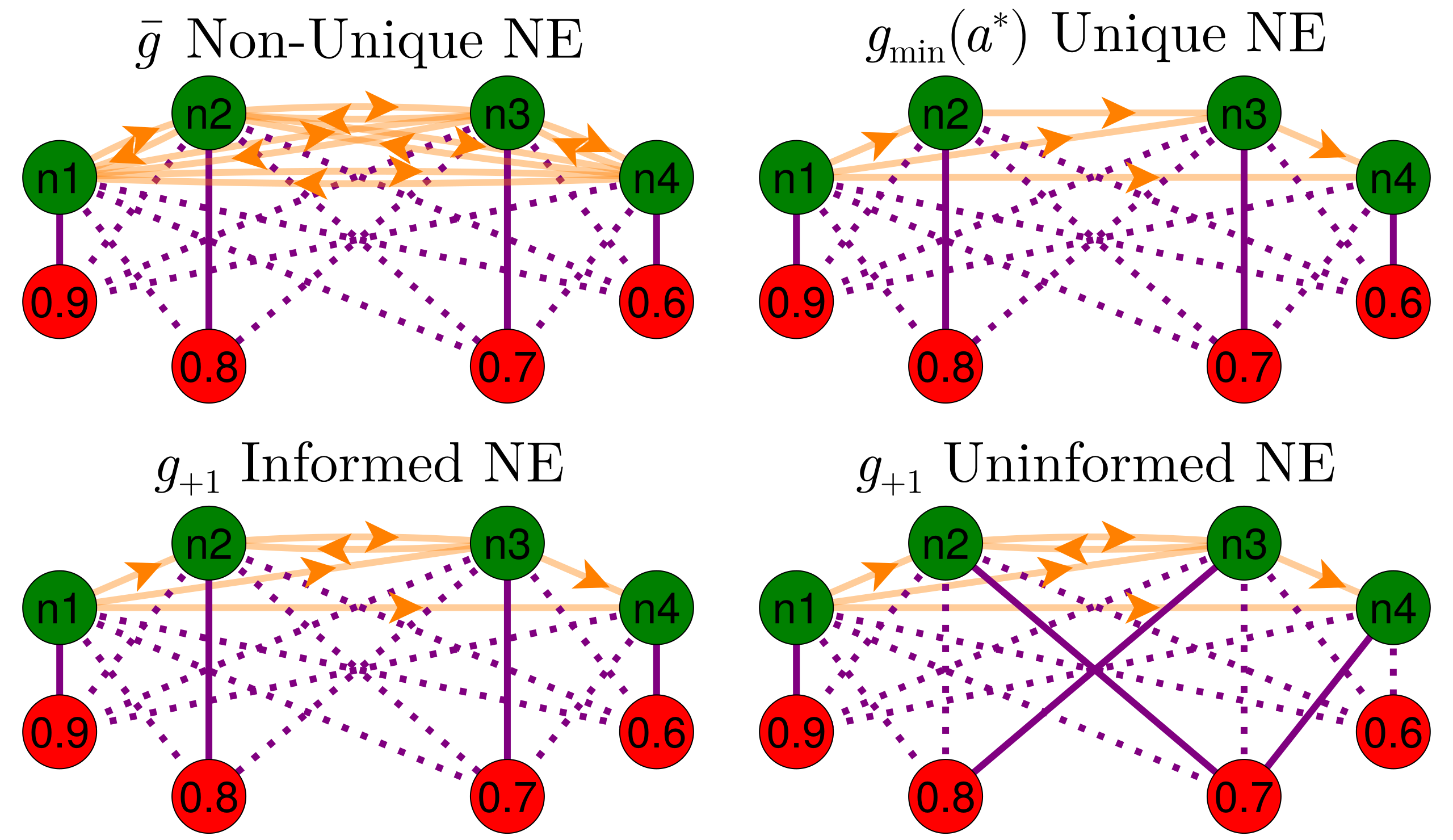}
    \caption{Examples of High-Information Games.
    Please see Figure~\ref{fig:star} for instructions on how to interpret each diagram as a resource allocation game.
    All action profiles showcased above are Nash equilibria.
    }
    \label{fig:addcomm}
\end{figure}
In this section, we present example games examining high-information games and low-information games.
Then, we conclude by giving an experiment which showcases Algorithm~\ref{algo:min communication} execution time on large games.
All figures and experiments presented in this work can be reproduced by source code available on \href{https://github.com/brandon-is-coding/Minimal_Communication_Games}{GitHub}.

Beginning with high-information games, Figure~\ref{fig:addcomm} depicts a variety of related high information games.
In the top left, the full information game $\bar{g}\in\bar{G}$ is depicted.
In this game all agents can cover all resources, with the exception that the resource worth 0.8 cannot be covered by agent $n4$.
The depicted action profile (in solid purple lines) represents the computed strict Nash equilibrium from Algorithm~\ref{algo:SNE}.

On the top right the associated minimal communication game $\gmin$ is depicted, as computed by Algorithm~\ref{algo:min communication}.
Here, an intuitive connection can be seen between the communication links in $\gmin$ and the equilibrium $a^*$ via the emergent DAG  communication structure (Algorithm~\ref{algo:min communication} will always produce a DAG).
Particularly, the agent who takes the highest value resource (in this case agent $n1$) can be thought of as moving first and without regard for any other agent, so they need no incoming communication links.
Then, the next agent who takes the second highest value resource (agent $n2$) receives information from the previous agents which effectively removes previously selected resources (the $0.9$ resource) from their action set.
This process continues, constructing a one way flow of information, such that agents who select lower value resources information that higher valued resources are already taken when necessary.
This sequential intuition of resource selection extends into the proof of Theorem~\ref{thm:gmin}, where it is formalized as mathematical induction.

Then, in the bottom two diagrams of Figure~\ref{fig:addcomm}, we construct a game $g_{+1}$ by adding a single communication link $n3\in \mathcal{N}_{n2}$ so that $n2$ now receives information from $n3$.
Notably, this new game has two equilibria which are depicted on the left and right.
On the left, $a^*\in \textrm{NE}(g)$ is depicted, which is required by Theorem~\ref{thm:gmin} (because $g_{+1}\geq_\mathcal{N}\gmin$).
However, on the right an uninformed equilibrium has been introduced because the 0.8 and 0.7 resources can now be covered by either $n2,n3$.
The issue is caused because agent $n4$ only receives information from $n3$ and implicitly expects $n3$ to cover the 0.7 resource.
In this way, minimal communication games are fragile in that $\textrm{sPoA}(\gmin)\geq \textrm{sPoA}(\bar{g})$, however $\textrm{sPoA}(g)<\textrm{sPoA}(\bar{g})$.
Formally, we can compute $\textrm{sPoA}(\bar{g})=\textrm{sPoA}(\gmin)=\frac{3}{3}=1$ but $\textrm{sPoA}(g)=\frac{2.4}{3}=0.8$.
This shows that even though the 4th statement of Theorem~\ref{thm:gmin} seems very promising that adding  communication links should improve the quality of emergent behavior, it is not strictly true with respect to $\textrm{sPoA}$ or PoA.

\begin{figure}
    \centering
    \includegraphics[scale=0.25]{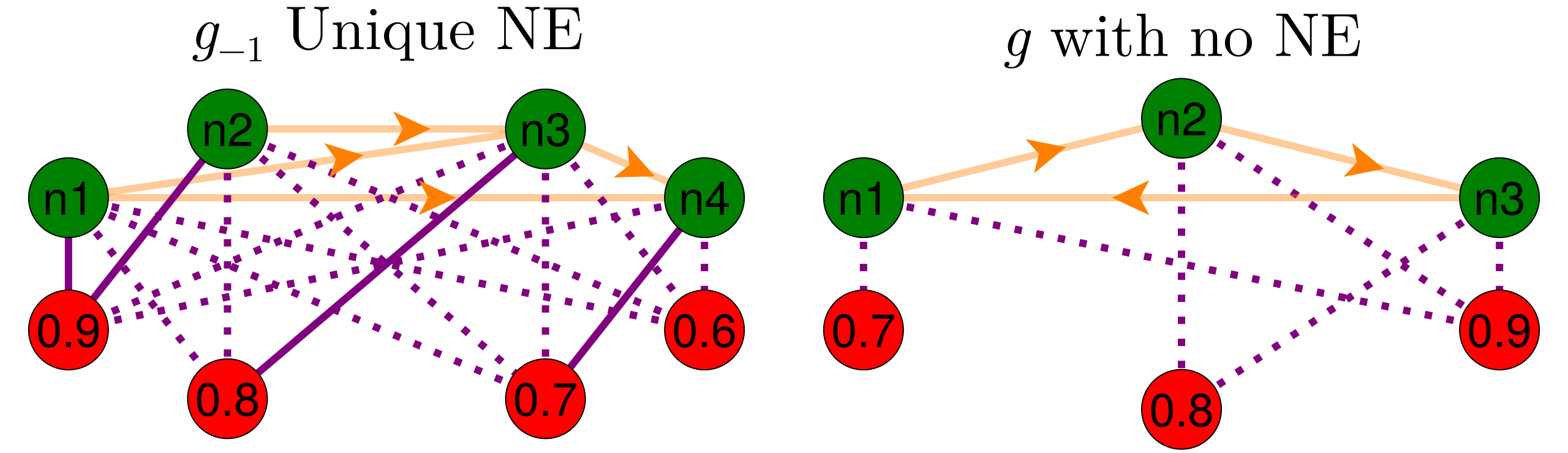}
    \caption{Examples of Low-Information Games.
    Please see Figure~\ref{fig:star} for instructions on how to interpret each diagram as a resource allocation game.
    On the left, game $g_{-1}$ is game $g_{\min}(a^*)$ from Figure~\ref{fig:addcomm} with the communication link from $n1$ to $n2$ removed.
    On the right, a game with no Nash equilibrium is showcased.
    \label{fig:lowinfo}
    }
\end{figure}

Next, in Figure~\ref{fig:lowinfo} we examine low-information games.
On the left, we depict the game $g_{-1}$, named such because it is the same as the previously discussed game $g_{\min}(a^*)$ with the communication link from $n1$ to $n2$ removed.
This game is a case study on how the  equilibrium quality degrades as games become low information.
Interestingly, the agents remain relatively strategic as only one pair of agents double cover resources.
In fact, agents $n3$, $n4$ act optimally given the circumstances.
In the same way, it can be seen that removing further links can cause agents to all pile onto high value resources.
For example, removing the link from $n1$ to $n4$ would cause $n4$ to cover the high value resource.

On the right, a different low-information game $g$ with no Nash equilibrium is shown.
It can be seen that the cycle of information causes unilaterial updates by agents to never settle into an equilibrium.
Supposing the agents update in order, agent $n1$ will select the 0.9 value resource, then with knowledge of that $n2$ will cover the 0.8 value resource.
However, agent $n3$ does not know $n1$ covers the 0.9 resource, so will cover it.
Then, agent $n1$ will see agent $n3$ has covered it, and switch to the 0.7 resource, causing $n2$ to now cover the 0.9 resource.
This cycle will repeat indefinitely, and no action profile will satisfy all agents.
Together, these two examples highlight the low-information game space, as described by Theorem~\ref{thm:gmin} and Theorem~\ref{thm:informed ne}.
First, it may be that there is no equilibrium present.
Second, if there is an equilibrium, then it is guaranteed that it is not informed.

\begin{figure}
    \centering
    \includegraphics[scale=0.08]{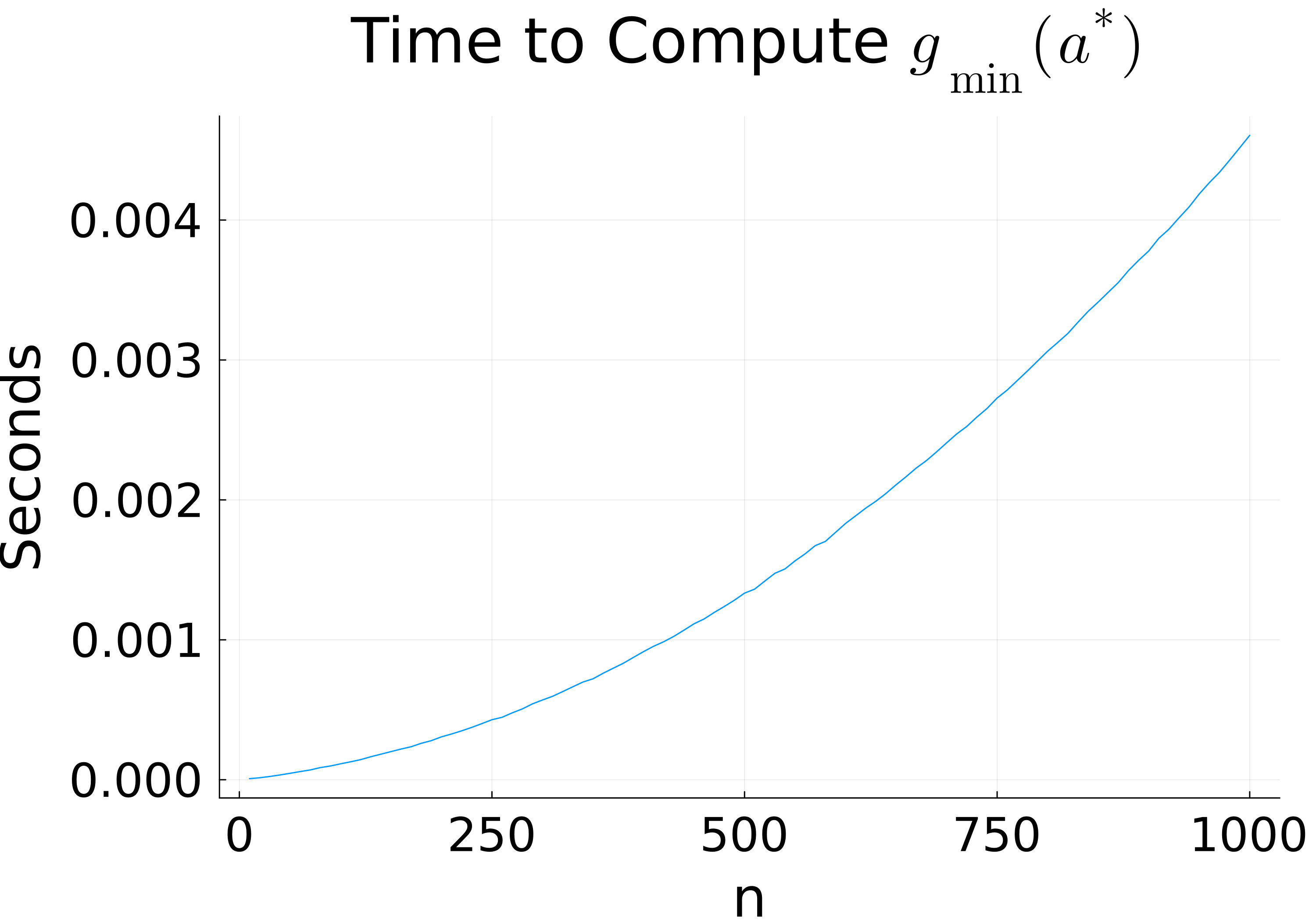}
    \caption{Time to Compute a Minimum Communication Game via Algorithm~\ref{algo:min communication}. Experiment conducted on an M4 iMac.}
    \label{fig:runtime}
\end{figure}

Finally, we showcase the runtime of Algorithm~\ref{algo:min communication} (which calls Algorithms~\ref{algo:NE} and \ref{algo:SNE}) in Figure~\ref{fig:runtime}.
For this experiment, we construct large resource allocation games.
Given values $n\in \{10,20,30,\dots,1000\}$, we create $n$ resources of slightly decrementing values $V=\{1-x\epsilon : x\in N\}$ for a small positive $\epsilon$.
Then, for each agent we randomly select 10 resources as the agent's coverage set $A_i$.
Finally, to ensure that Assumption A1 is satisfied, we create $n$ more resources with value $\epsilon$ and uniquely assign these low value resources to each agent.
In this way, we create a random and non-trivial resource allocation game for any $n$.

To time Algorithm~\ref{algo:min communication}, we generate $20$ random games for each $n$ and report the averaged time for each solution.
In the figure, we can see that the time complexity appears to grow with a convex curve.
This is expected, since Algorithm~\ref{algo:SNE}'s time complexity is $\mathcal{O}(n^2|A_*|)$, noting that in this experiment $|A_*|=11$ is fixed, so we would expect to observe $\mathcal{O}(n^2)$.
We highlight that the largest game considered in this experiment ($n=1000$) has a very large action space $|A|=11^{1000}$ and runs in the millisecond timescale.
This emphasizes that this algorithm is suitable to analyze large domains of games or provide communication graphs in real-time. 


\section{Conclusion}
In this work, we presented a pipeline of algorithms that compute a minimal communication game such that the Price of Anarchy is bounded below by the full communication game.
To support this computation, we provided a characterization of strict Nash equilibria across all communication profiles, and showed the existence of weak best response paths from a Nash equilibrium to a strict Nash equilibrium.
Finally, we highlighted that the proposed minimum communication game can be used to partition the space of communication networks to form high and low information games.
In future work, we plan to study the equilibrium properties of high and low information communication spaces with higher fidelity.
For example, high-information games serve as a sufficient condition for the existence of Nash equilibrium.
However, the conditions in low-information setting such that a Nash equilibrium exists remains an open question.

\bibliography{references}

\begin{thebibliography}{10}

\bibitem{kuntze2012seneka}
H.-B. Kuntze, C.~W. Frey, I.~Tchouchenkov, B.~Staehle, E.~Rome, K.~Pfeiffer, A.~Wenzel, and J.~W{\"o}llenstein, ``Seneka-sensor network with mobile robots for disaster management,'' in {\em 2012 IEEE Conference on Technologies for Homeland Security (HST)}, pp.~406--410, IEEE, 2012.

\bibitem{steinhausler2022detection}
F.~Steinh{\"a}usler and H.~V. Georgiou, ``Detection of victims with uavs during wide area search and rescue operations,'' in {\em 2022 IEEE International Symposium on Safety, Security, and Rescue Robotics (SSRR)}, pp.~14--19, IEEE, 2022.

\bibitem{zhang2022training}
N.~Zhang, F.~Nex, G.~Vosselman, and N.~Kerle, ``Training a disaster victim detection network for uav search and rescue using harmonious composite images,'' {\em Remote Sensing}, vol.~14, no.~13, p.~2977, 2022.

\bibitem{tusnio2022efficiency}
N.~Tu{\'s}nio and W.~Wr{\'o}blewski, ``The efficiency of drones usage for safety and rescue operations in an open area: A case from poland.,'' {\em Sustainability (2071-1050)}, vol.~14, no.~1, 2022.

\bibitem{arnold2018search}
R.~D. Arnold, H.~Yamaguchi, and T.~Tanaka, ``Search and rescue with autonomous flying robots through behavior-based cooperative intelligence,'' {\em Journal of International Humanitarian Action}, vol.~3, no.~1, pp.~1--18, 2018.

\bibitem{eom2019uav}
S.~Eom, H.~Lee, J.~Park, and I.~Lee, ``Uav-aided two-way mobile relaying systems,'' {\em IEEE Communications Letters}, vol.~24, no.~2, pp.~438--442, 2019.

\bibitem{alexis2009coordination}
K.~Alexis, G.~Nikolakopoulos, A.~Tzes, and L.~Dritsas, ``Coordination of helicopter uavs for aerial forest-fire surveillance,'' in {\em Applications of intelligent control to engineering systems}, pp.~169--193, Springer, 2009.

\bibitem{tzoumas2023wildfire}
G.~Tzoumas, L.~Pitonakova, L.~Salinas, C.~Scales, T.~Richardson, and S.~Hauert, ``Wildfire detection in large-scale environments using force-based control for swarms of uavs,'' {\em Swarm Intelligence}, vol.~17, no.~1, pp.~89--115, 2023.

\bibitem{pham2018distributed}
H.~X. Pham, H.~M. La, D.~Feil-Seifer, and M.~C. Deans, ``A distributed control framework of multiple unmanned aerial vehicles for dynamic wildfire tracking,'' {\em IEEE Transactions on Systems, Man, and Cybernetics: Systems}, vol.~50, no.~4, pp.~1537--1548, 2018.

\bibitem{innocente2019self}
M.~S. Innocente and P.~Grasso, ``Self-organising swarms of firefighting drones: Harnessing the power of collective intelligence in decentralised multi-robot systems,'' {\em Journal of computational science}, vol.~34, pp.~80--101, 2019.

\bibitem{alsammak2023nature}
I.~L.~H. Alsammak, M.~A. Mahmoud, S.~S. Gunasekaran, A.~N. Ahmed, and M.~AlKilabi, ``Nature-inspired drone swarming for wildfires suppression considering distributed fire spots and energy consumption,'' {\em Ieee Access}, vol.~11, pp.~50962--50983, 2023.

\bibitem{john2025resource}
J.~John, S.~Velhal, and S.~Sundaram, ``A resource-efficient decentralized sequential planner for spatiotemporal wildfire mitigation,'' {\em IEEE Transactions on Automation Science and Engineering}, vol.~22, pp.~11469--11482, 2025.

\bibitem{comfort2006communication}
L.~K. Comfort and T.~W. Haase, ``Communication, coherence, and collective action: The impact of hurricane katrina on communications infrastructure,'' {\em Public Works management \& policy}, vol.~10, no.~4, pp.~328--343, 2006.

\bibitem{manzini2023wireless}
T.~Manzini, R.~Murphy, D.~Merrick, and J.~Adams, ``Wireless network demands of data products from small uncrewed aerial systems at hurricane ian,'' in {\em 2023 IEEE/RSJ International Conference on Intelligent Robots and Systems (IROS)}, pp.~9941--9946, IEEE, 2023.

\bibitem{mcrae2021utilizing}
J.~N. McRae, B.~M. Nielsen, C.~J. Gay, A.~P. Hunt, and A.~D. Nigh, ``Utilizing drones to restore and maintain radio communication during search and rescue operations,'' {\em Wilderness \& Environmental Medicine}, vol.~32, no.~1, pp.~41--46, 2021.

\bibitem{nayak2020experimental}
S.~Nayak, S.~Yeotikar, E.~Carrillo, E.~Rudnick-Cohen, M.~K.~M. Jaffar, R.~Patel, S.~Azarm, J.~W. Herrmann, H.~Xu, and M.~Otte, ``Experimental comparison of decentralized task allocation algorithms under imperfect communication,'' {\em IEEE Robotics and Automation Letters}, vol.~5, no.~2, pp.~572--579, 2020.

\bibitem{schneider2015auction}
E.~Schneider, E.~I. Sklar, S.~Parsons, and A.~T. {\"O}zgelen, ``Auction-based task allocation for multi-robot teams in dynamic environments,'' in {\em Conference Towards Autonomous Robotic Systems}, pp.~246--257, Springer, 2015.

\bibitem{otte2020auctions}
M.~Otte, M.~J. Kuhlman, and D.~Sofge, ``Auctions for multi-robot task allocation in communication limited environments,'' {\em Autonomous robots}, vol.~44, no.~3, pp.~547--584, 2020.

\bibitem{sujit2012multi}
P.~Sujit and J.~Sousa, ``Multi-uav task allocation with communication faults,'' in {\em 2012 American Control Conference (ACC)}, pp.~3724--3729, IEEE, 2012.

\bibitem{rachmut2023asynchronous}
B.~Rachmut, S.~A. Nelke, and R.~Zivan, ``Asynchronous communication aware multi-agent task allocation.,'' in {\em AAMAS}, pp.~2340--2342, 2023.

\bibitem{su2016coordination}
X.~Su, M.~Zhang, and Q.~Bai, ``Coordination for dynamic weighted task allocation in disaster environments with time, space and communication constraints,'' {\em Journal of Parallel and Distributed Computing}, vol.~97, pp.~47--56, 2016.

\bibitem{rosenthal1973class}
R.~W. Rosenthal, ``A class of games possessing pure-strategy nash equilibria,'' {\em International journal of game theory}, vol.~2, no.~1, pp.~65--67, 1973.

\bibitem{milchtaich1996congestion}
I.~Milchtaich, ``Congestion games with player-specific payoff functions,'' {\em Games and economic behavior}, vol.~13, no.~1, pp.~111--124, 1996.

\bibitem{ieong2005fast}
S.~Ieong, R.~McGrew, E.~Nudelman, Y.~Shoham, and Q.~Sun, ``Fast and compact: A simple class of congestion games,'' in {\em Proceedings of the 20th national conference on Artificial intelligence-Volume 2}, pp.~489--494, 2005.

\bibitem{ashlagi2006resource}
I.~Ashlagi, D.~Monderer, and M.~Tennenholtz, ``Resource selection games with unknown number of players,'' in {\em Proceedings of the fifth international joint conference on Autonomous agents and multiagent systems}, pp.~819--825, 2006.

\bibitem{ashlagi2007learning}
I.~Ashlagi, D.~Monderer, and M.~Tennenholtz, ``Learning equilibrium in resource selection games,'' in {\em PROCEEDINGS OF THE NATIONAL CONFERENCE ON ARTIFICIAL INTELLIGENCE}, vol.~22, p.~18, Menlo Park, CA; Cambridge, MA; London; AAAI Press; MIT Press; 1999, 2007.

\bibitem{gharesifard2017distributed}
B.~Gharesifard and S.~L. Smith, ``Distributed submodular maximization with limited information,'' {\em IEEE transactions on control of network systems}, vol.~5, no.~4, pp.~1635--1645, 2017.

\bibitem{grimsman2018impact}
D.~Grimsman, M.~S. Ali, J.~P. Hespanha, and J.~R. Marden, ``The impact of information in distributed submodular maximization,'' {\em IEEE Transactions on Control of Network Systems}, vol.~6, no.~4, pp.~1334--1343, 2018.

\bibitem{sun2020distributed}
H.~Sun, D.~Grimsman, and J.~R. Marden, ``Distributed submodular maximization with parallel execution,'' in {\em 2020 American Control Conference (ACC)}, pp.~1477--1482, IEEE, 2020.

\bibitem{konda2022execution}
R.~Konda, D.~Grimsman, and J.~R. Marden, ``Execution order matters in greedy algorithms with limited information,'' in {\em 2022 American Control Conference (ACC)}, pp.~1305--1310, IEEE, 2022.

\bibitem{grimsman2022impact}
D.~Grimsman, M.~R. Kirchner, J.~P. Hespanha, and J.~R. Marden, ``The impact of measurement passing in sensor network measurement selection,'' {\em IEEE Transactions on Control of Network Systems}, vol.~10, no.~1, pp.~112--123, 2022.

\bibitem{koutsoupias1999worst}
E.~Koutsoupias and C.~Papadimitriou, ``Worst-case equilibria,'' in {\em Annual symposium on theoretical aspects of computer science}, pp.~404--413, Springer, 1999.

\bibitem{grimsman2020cost}
D.~Grimsman, J.~H. Seaton, J.~R. Marden, and P.~N. Brown, ``The cost of denied observation in multiagent submodular optimization,'' in {\em 2020 59th IEEE Conference on Decision and Control (CDC)}, pp.~1666--1671, IEEE, 2020.

\bibitem{seaton2022all}
J.~H. Seaton and P.~N. Brown, ``All stable equilibria have improved performance guarantees in submodular maximization with communication-denied agents,'' {\em IEEE Control Systems Letters}, vol.~6, pp.~2491--2496, 2022.

\bibitem{grimsman2022valid}
D.~Grimsman, P.~N. Brown, and J.~R. Marden, ``Valid utility games with information sharing constraints,'' in {\em 2022 IEEE 61st Conference on Decision and Control (CDC)}, pp.~5739--5744, IEEE, 2022.

\bibitem{brown2023robust}
P.~N. Brown, J.~H. Seaton, and J.~R. Marden, ``Robust networked multiagent optimization: designing agents to repair their own utility functions,'' {\em Dynamic Games and Applications}, vol.~13, no.~1, pp.~187--207, 2023.

\bibitem{singh2025optimal}
V.~Singh, W.~Wesley, and P.~N. Brown, ``Optimal utility design with arbitrary information networks,'' in {\em 2025 American Control Conference (ACC)}, pp.~2895--2900, IEEE, 2025.

\bibitem{singh2025worst}
V.~Singh and P.~N. Brown, ``Worst-case equilibria in networked resource allocation games rest on a knife-edge,'' in {\em 2025 IEEE 64th Conference on Decision and Control (CDC)}, pp.~4409--4414, IEEE, 2025.

\bibitem{monderer1996potential}
D.~Monderer and L.~S. Shapley, ``Potential games,'' {\em Games and economic behavior}, vol.~14, no.~1, pp.~124--143, 1996.

\bibitem{marden2012revisiting}
J.~R. Marden and J.~S. Shamma, ``Revisiting log-linear learning: Asynchrony, completeness and payoff-based implementation,'' {\em Games and Economic Behavior}, vol.~75, no.~2, pp.~788--808, 2012.

\bibitem{kleer2017potential}
P.~Kleer and G.~Sch{\"a}fer, ``Potential function minimizers of combinatorial congestion games: Efficiency and computation,'' in {\em Proceedings of the 2017 ACM Conference on Economics and Computation}, pp.~223--240, 2017.

\bibitem{Bilo2025Minimizing}
V.~Bil{\`o}, A.~Fanelli, L.~Gourv{\`e}s, C.~Tsoufis, and C.~Vinci, ``Minimizing rosenthal's potential in monotone congestion games,'' in {\em Proceedings of the 24th International Conference on Autonomous Agents and Multiagent Systems}, AAMAS '25, (Richland, SC), p.~343–351, International Foundation for Autonomous Agents and Multiagent Systems, 2025.

\end{thebibliography}
\bibliographystyle{ieeetr}

\appendices
\section{Proofs}

\subsection*{Proof of Theorem~\ref{thm:sne exists}}
\begin{proof}
    Let full communication game $\bar{g}\in G$ satisfy assumptions A1 and A2.
    Assume by contradiction SNE$(\bar{g})=\emptyset$ is the empty set.
    Let $A^I=\{a\in A: |a|_r\leq1, \forall r\in R\}$ be the set of informed actions and by Assumption A1, $A^I$ is non-empty.
    Consider an informed action profile $a\in A^I$, and by assumption $a$ cannot be a SNE.
    By definition there exists some agent $i\in N$ with action $a'_i\in A_i$ such that $U_i(a'_i,a_{-i})\geq U(a_i,a_{-i})$.
    However, it cannot be $U_i(a'_i,a_{-i})= U(a_i,a_{-i})$ because $a$ is informed and by assumption A2.
    Particularly, $U_i(a)=v_{a_i}\neq v_{a_i'}=U(a'_i,a_{-i})$.
    Therefore the inequality must be $U_i(a'_i,a_{-i})> U(a_i,a_{-i})$ and the new action profile $a'=(a'_i,a_{-i})\in A^I$ as $U_i(a'_i,a_{-i})>0$ implies $|a'|_{a'_i}=1$.
    Now consider the sum 
    $$f(a)=\sum_{i\in N}U_i(a)$$
    and observe that $U_j(a)=U_j(a')$ for all $j\neq i$ for $a,a'$ as constructed above.
    Therefore, we can construct an $a'$ such that $f(a')>f(a)$ for any $a\in A^I$.
    Consider the problem of finding the max of function $f$ over $A^I$, given by $\max_{a\in A^I}f(a)$.
    However, this maximum does not exist because for any proposed maximum $a\in A^I$ there exists some $a'\in A^I$ such that $f(a')>f(a)$.
    But $\{f(a): a\in A^I\}$ is a finite set of real numbers and must have a maximum.
    This is a contradiction, and therefore SNE$(\bar{g})$ must be a non-empty set.
\end{proof}

\subsection*{Proof of Theorem~\ref{thm:informed ne}}
\begin{proof}
    Let full communication game $\bar{g}\in \bar{G}$ satisfy Assumptions A1 and A2.
    For clarity, we refer to the utility functions associated with $\bar{g}$ as $\bar{U}_i$ throughout this proof.
    We begin with the forward direction, assuming that $a^*\in \textrm{SNE}(\bar{g})$ and we show that $a^*\in\cup_{g\in G(\bar{g})} \textrm{NE}(g)$ and $|a^*|_r\leq 1$ for all $r\in R$.
    First, we observe that $a^*\in \textrm{SNE}(\bar{g})$ exists by Theorem~\ref{thm:sne exists} (this dependency requires Assumptions A1 and A2) as $\textrm{SNE}(\bar{g})$ is non-empty.
    Next, $a^*\in\cup_{g\in G(\bar{g})} \textrm{NE}(g)$ is trivial by assumption as $\bar{g}\in G(\bar{g})$.
    All that remains is to show that $a^*$ is an informed action, that is, $|a|_r\leq 1$ for all $r\in R$.
    To show this, suppose by contradiction there exists some $i\in N$ such that $|a^*|_{a_i^*}\geq 2$.
    Then $\bar{U}_i(a^*)=W(a^*_i,a^*_{-i})-W(a^*_{-i})=0$ as $|a^*_{-i}|\geq 1$ by marginal contribution.
    In the case $|A_i|\geq 2$, then there exists $a_i\neq a^*_i$ such that $U(a_i,a^*_{-i})\geq U(a^*)$.
    If $|A_i|=1$, then by Assumption A1 the agent $j$ with $a^*_j=a^*_i$ must have $|A_i|\geq2$, and the above contradiction can be applied to $j$.
    Therefore, by contradiction if $a^*\in \textrm{SNE}$ then it must be $|a^*|_r\leq 1$ for all $r\in R$.
    This concludes the forward direction.

    Moving to the backward direction, we assume that $a^*\in\cup_{g\in G(\bar{g})} \textrm{NE}(g)$ and $|a^*|_r\leq 1$ for all $r\in R$ and show $a^*\in \textrm{SNE}(\bar{g})$.
    The proof proceeds in two cases on $g\in G(\bar{g})$ such that $a^*\in \textrm{NE}(g)$.
    We begin with the first case and assume that $g=\bar{g}$ has full communication.
    By definition of Nash \eqref{eq:NE} we have that $\bar{U}_i(a^*)\geq \bar{U}_i(a'_i,a^*)$ for any action $a'_i$ for all agents $i\in N$. 
    It suffices to show that this inequality must be strict, which is true by Assumption A2 and $a^*$ being informed.
    Informed gives that $U_i(a^*)>0$ (meaning no other agents are cover the same resource) and Assumption A2 gives that $U_i(a'_i,a^*_{-i})\neq U_i(a^*_i,a^*_{-i})$ as any resources $a'_i=r'$ and $a^*_i=r$ must have $v_r\neq v_{r'}$.

    In the second case, we consider $a^*\in \textrm{NE}(g)$ for some $g\in G(\bar{g})\setminus\{\bar{g}\}$ with communication denial.
    We denote the utility function of game $g$ as $U_i$, which we highlight as distinct from the one from game $\bar{g}$ which is $\bar{U}_i$. 
    We first observe that $\bar{U}(a^*)=U(a^*)=v_r$ where $r=a^*_i$.
    This follows because $a^*$ is informed, that is because $i$ is the only agent covering $r$, $\mathcal{N}_i$ does not impact the utility.
    However, if we consider arbitrary unilateral deviation $a'_i\in A_i\setminus\{a^*_i\}$ from $a^*$ by any agent $i$, then we have $\bar{U}(a'_i,a^*_{-i})\leq U(a'_i,a^*)$.
    This is because either  $|a_i,a^*_{-i}|_{r'}\geq 2$, or $|a_i,a^*_{-i}|_{r'}=1$ (where $a'_i=r'$).
    In the first case $\bar{U}(a^*_i,a^*_{-i})=0$ and in the second case $\bar{U}(a^*_i,a^*_{-i})=v'_r$.
    Putting these two observations together with the definition of $a^*$ being a Nash equilibrium, we get the following inequality chain:
    $\bar{U}(a^*_i,a^*_{-i})=U(a^*_i,a^*)\geq U(a'_i,a^*)\geq \bar{U}(a'_i,a^*)$.
    This gives that $a^*\in \textrm{NE}(\bar{g})$.
    Then, the argument in the first case $g=\bar{g}$ can be applied as is resulting in $a^*\in \textrm{SNE}(\bar{g})$ as desired.
    This concludes the backward direction of the proof.
\end{proof}
\subsection*{Proof of Theorem~\ref{thm:gmin}}
\begin{proof}
    Let $\gmin\in G(\bar{g})$ be a minimal communication game with respect to some $a^*\in \textrm{SNE}(\bar{g})$ for a full information game $\bar{g}\in \bar{G}$ that satisfies Assumptions A1 and A2.
    Let $\mathcal{N}_i$ denote the communication structure for game $\gmin$.
    We begin by verifying the first statement that $\textrm{NE}(g)=\{a^*\}$, noting that the the second claim follows immediately from the first.

    We prove the first claim in two parts, first showing $a^*\in \textrm{NE}(g)$ and then that it is the unique Nash equilibrium.
    Consider some agent $i$'s decision to select $a^*_i$ in contrast some distinct $a_i\in A_i$, given by $U(a^*_i,a^*_{-i})-U(a_i,a^*_{-i})=v_{a^*_i}-U(a_i,a^*_{-i})$, where the equality follows by Theorem~\ref{thm:informed ne} and therefore $a^*$ is informed.
    We verify this difference is non-negative for all $i\in N$ and $a_i\in A_i$, giving $a^*$ must be a Nash equilibrium
    We consider two cases: $a_i=a^*_j$ for some $j$ and $a_i\neq a^*_j$ for all $j$.
    Beginning with $a_i=a^*_j$, this implies that $a^*_j\in A_i$ fulfilling the first condition of the if of Algorithm~\ref{algo:min communication}.
    Then, if $v_{a^*_j}>v_{a^*_i}$ the Algorithm ensures that $j\in \mathcal {N}_i$ giving $U(a_i,a^*_{-i})=0$ ensuring the difference is positive.
    Otherwise, if $v_{a^*_j}<v_{a^*_i}$ then $U(a^*_i,a^*_{-i})-U(a_i,a^*_{-i})=v_{a^*_i}-v_{a^*_j}>0$.
    Finally, in the case that $a_i\neq a^*_j$ for all $j\in N$, the following statement holds:
    \begin{equation}
        U(a^*_i,a^*_{-i})=\bar{U}(a^*_i,a^*_{-i})>\bar{U}(a_i,a^*_{-i})=U(a_i,a^*_{-i}).
    \end{equation}
    The first equality hold because Theorem~\ref{thm:informed ne} gives that $a^*$ must be informed, the inequality holds by definition of $a^*\in \textrm{SNE}(\bar{g})$, and the third equality holds by assumption that $a_i\neq a^*_j$ for all $j\in N$.
    Therefore, $U(a^*_i,a^*_{-i})>U(a_i,a^*_{-i})$ for any $i\in N$ and $a_i\in A_i\setminus\{a^*_i\}$ which gives that $a^*\in \textrm{NE}(g)$.

    Now that we have established that $a^*\in \textrm{NE}(g)$, we further show that $\textrm{NE}(g)=\{a^*\}$
    To do this, sort the agents into the sequence $(i^1,i^2,i^3,...,i^n)$ such that $v_{a^*_{i^t}}>v_{a^*_{i^{t+1}}}$.
    We use strong induction on this sequence to show that for any $a\in \textrm{NE}(g)$, it must be that $a_{i^t}=a^*_{i^t}$ for all $t\in \{1,\dots,n\}$. 
    First, we verify the base case $a_{i^1}=a^*_{i^1}$ holds.
    Because we have that $v_{a^*_{i^t}}=\max_{r\in R} v_r$, by Algorithm~\ref{algo:min communication} it must by the if condition in $\mathcal{N}_{i^1}=\emptyset$.
    Therefore, $U_i(a^*_{i^1},a_{-i^1})=\max_{r\in R} v_r$ regardless of $a_{-i^1}$, meaning $a_{i^1}=a^*_{i^1}$ because $a$ is a strict Nash equilibrium.
    Now, we move to the inductive step and assume that $a_{i^{k}}=a^*_{i^{k}}$ holds for all $k< t$ and we seek to show that $a_{i^{t}}=a^*_{i^{t}}$.
    Now we partition the agents into two sets $N_{<t}=\{i^1,\dots i^k\}$ and $N_{>t}=\{i^{t+1},\dots, i^n\}$.
    For the first set, taking $i^k\in N_{<t}$, we have $a_{i^k}=a^*_{i^k}$, and by the if statement of Algorithm~\ref{algo:min communication} that if $a_{i^k}\in A_{i^t}$ then $i^k\in \mathcal{N}_i$.
    Now, consider $i^k\in N_{>t}$, it must be $i^k\notin \mathcal{N}_{i^t}$ because $v_{a^*_{i^k}}<v_{a^*_{i^t}}$.
    Putting these two observations together, anytime  $i^k\in \mathcal{N}_{i^t}$, it must be that $a_{i^k}=a^*_{i^k}$.
    Therefore, we get the following utility equality
    \begin{equation}
        U_{i^{t}}(a'_i,a^*_{-i^{t}})=U_{i^{t}}(a'_i,a_{-i^{t}})
    \end{equation}
    for all $a'_i\in A_i$.
    By definition that $a^*_{i^t}\in \textrm{SNE}(g)$, it must be that $a_{i^t}=a^*_{i^t}$, completing the induction.
    Therefore, anytime $a\in \textrm{NE}(g)$, it must be that $a=a^*$, meaning that $a^*$ is the unique element of $\textrm{NE}(g)$ as desired.
    Further, because 
    \begin{equation}
    \{a^*\}=\textrm{SNE}(g)=\textrm{NE}(g)\subseteq\textrm{SNE}(\bar{g})\subseteq\textrm{NE}(\bar{g})   
    \end{equation}
    it must be that $\textrm{sPoA}(g)\geq \textrm{sPoA}(\bar{g})$ and $\textrm{PoA}(g)\geq \textrm{PoA}(\bar{g})$ completing the second claim.

    We now verify the third claim,
    letting $g'\in G(\bar{g})$ have $\mathcal{N}_i\nsubseteq \mathcal{N}'_i$ for some $i\in N$ and by contradiction assume $a^*\in \textrm{SNE}(g')$.
    By definition of $g'$ must exist some $j\in N$ such that $j\in \mathcal{N}_i$ but $j\notin \mathcal{N}'_i$.
    Now consider agent $i$'s utility between selecting $a^*_j$ and $a^*_i$:
    \begin{equation}\label{}
        U'_i(a^*_j,a^*_{-i})-U'_i(a^*_i,a^*_{-i})
    = v_{a^*_j}-v_{a^*_i} > 0.
    \end{equation}
    The first equality follows because $U'_i(a^*_j,a^*_{-i})=v_{a^*_j}$ and $U'_i(a^*_i,a^*_{-i})=v_{a^*_i}$.
    The first follows because $j\notin \mathcal{N}'_i$ and $a^*_j$ is uniquely covered by agent $j$ and the second because $a^*_i$ is uniquely covered by agent $i$ (both unique coverings follow from Theorem~\ref{thm:informed ne}). 
    Finally, the inequality holds  because $j\in \mathcal{N}_i$ it must have been $v_{a^*_j}>v_{a^*_i}$ by the if statement in Algorithm~\ref{algo:min communication}.
    Therefore, any such game $g'$ has $a^*\notin \textrm{SNE}(g')$ and therefore $g$ must the minimal communication game.

    We now verify the fourth claim, letting $g''\in G(\bar{g})$ have $\mathcal{N}_i\subseteq \mathcal{N}''_i$ for all $i\in N$.
    The following sequence expression holds for any $a_i\in A$ such that $a_i\neq a^*_i$:
    \begin{equation} \label{eq:proof:util ineq}
    U''_i(a^*_{i},a^*_{-i})=U_i(a^*_{i},a^*_{-i})>U_i(a_i,a^*_{-i})\geq U''_i(a_i,a^*_{-i}).
    \end{equation}
    The first equality follows because by Theorem~\ref{thm:informed ne} gives that  $a^*$ is informed, meaning that regardless of $\mathcal{N}_i,\mathcal{N}''_i$, the utility of $a^*_i$ does not change.
    The first inequality follows by definition of $a^*\in \textrm{SNE}(g)$, and the second inequality follows from the taking four cases on $a_i:$
    \begin{equation}
    U_i(a_i,a^*_{-i})-U_i''(a_i,a^*_{-i})=\begin{cases}
        0 & \nexists j\in N \textrm{ s.t. }  a_i=a^*_j\\
        0 & a_i=a^*_j, j\in \mathcal{N}_i \\
        v_{a_i} & a_i=a^*_j, j\notin \mathcal{N}_i ,j\in \mathcal{N}''_i \\
        0 & a_i=a^*_j, j\notin \mathcal{N}_i, j\notin \mathcal{N}''_i \\
    \end{cases}
    \end{equation}
    noting that $\mathcal{N}_i\subseteq \mathcal{N}''_i$ gives that we do not need to consider the case where $a_i=a^*_j$ and $j\in\mathcal{N}_i$ but $j\notin\mathcal{N}_i$.
    Because all cases are non-negative, this verifies the second inequality of \eqref{eq:proof:util ineq}, giving that $U''_i(a^*_{i},a^*_{-i})>U''_i(a_i,a^*_{-i})$ for any $a_i\in A_i$ such that $a_i\neq a^*_i$.
    This is the definition of strict Nash equilibrium and therefore $a^*\in \textrm{SNE}(g'')$ as desired and completing the fourth claim.
\end{proof}

\subsection*{Proof of Lemma~\ref{thm:monotinicity}}
\begin{proof}
    Let $a^1,a^2\in A$ be action profiles resulting from Algorithm~\ref{algo:NE} on $\bar{g}\in \bar{G}$.
    Throughout, we use $a^{t,i}_{-i}$ to refer to the action profile that agent $i$ observed when they made their update in iteration $t$.
    Assume by contradiction that $R(a^1)\nsubseteq R(a^2)$, meaning there exists some resource $r\in R$ such that $r\in R(a^1)$ but $r\notin R(a^2)$.
    Particularly, there must exist an agent $i$ has $a^1_i=r$ but $a^2_i=r'$ such that $r\notin R(a^{2,i}_{-i})$.
    By definition of the algorithm, it must be that $U_i(a_i^2,a^{2,i}_{-i})>U_i(a_i^1,a^{2,i}_{-i})$, but $U_i(a_i^2,a^{1,i}_{-i})\leq U_i(a_i^1,a^{1,i}_{-i})$.
    First, observe that $U_i(a_i^1,a^{2,i}_{-i})=v_r$ because $r\notin R(a^{2,i}_{-i})$, and for the strict inequality to hold it must be $r'\notin R(a^{2,i}_{-i})$ and therefore $U_i(a_i^2,a^{2,i}_{-i})=v_{r'}$, giving $v_{r}<v_{r'}$ (Inequality is strict due to Assumption A2).
    Second, it must have been that $r'\in R(a^{1,i}_{-i})$ but $r'\notin R(a^{2,i}_{-i})$ (or else $a^1_i=r'$), meaning there exists an agent $j$ such that $a^1_j=r'$ and $a^2_j\neq r'$, and $r'\notin R(a^{2,j}_{-j})$.
    Agent $j$ now has the exact same conditions placed on them as agent $i$ originally, and the argument can be repeated verbatim. 
    Particularly, that there must exist resource $a^2_j=r''$ such that $U_j(a_j^2,a^{2,j}_{-j})>U_j(a_j^1,a^{2,j}_{-j})$, but $U_j(a_j^2,a^{1,j}_{-j})\leq U_j(a_j^1,a^{1,j}_{-j})$.
    The resulting inequality sequence $v_{r}<v_{r'}<v_{r''}<\dots$ can be repeated infinitely, which is a contradiction as $R$ is a finite set.
    Therefore, by contradiction $R(a^1)\subseteq R(a^2)$.
\end{proof}

\subsection*{Proof of Proposition~\ref{thm:find Nash}}
\begin{proof}
    Let $\bar{g}\in \bar{G}$ be a full information game that satisfies A2 and $a^1,a^2\in A$ refer to the action profiles from running Algorithm~\ref{algo:NE} on $\bar{g}$.
    Throughout, we use $a^{t,i}_{-i}$ to refer to the action profile that agent $i$ observed when they made their update in iteration $t$.
    Suppose by contradiction there exists $i\in N$ and action $a_i\in A_i$ such that $U_i(a_i,a^2_{-i})> U_i(a^2)$.
    Starting with case  $U_i(a^2)>0$, 
    we observe that $U_i(a_i,a^2)>U(a^2)>0$ gives that $U_i(a_i,a^2)=v_{a_i}>v_{a^2_i}=U_i(a^2)$.
    Second, $U_i(a_i,a^2)>0$ implies $a_i\notin R(a^2)$.
    By Lemma~\ref{thm:monotinicity} it must have been that $a_i\notin R(a^1)$, which implies that $U(a_i,a^{2,i}_{-i})=v_{a_i}$, but Algorithm~\ref{algo:NE} gave $\{a^2_i\}=\argmax_{a'_i \in \tilde{A}_i} U_i(a_i,a^{2,i}_{-i})$ (uniqueness comes from Assumption A2).
    This implies $v_{a^2_i}>v_{a_i}$ which is a contradiction.
    Therefore, it cannot be that $U_i(a^2)>0$.
    
    We consider the second case $U_i(a^2)=0$, once again we have $a_i\notin R(a^2)$ as it must be $U_i(a_i,a^2_{-i})>U_i(a^2_i,a^2_{-i})$.
    By Lemma~\ref{thm:monotinicity} (which gives requires Assumption A2) it must be that $a_i\notin R(a^1)$ which gives that $U_i(a_i,a^{2,i}_{-i})=v_{a_i}$ and therefore $U_i(a^2_i,a^{2,i}_{-i})=v_{a^2_i}>v_{a_i}$.
    For it to be true that $U_i(a_i,a^2_{-i})>U_i(a^2_i,a^2_{-i})$ it must be that some agent $j$ selected $a^2_i\in A_j$ after agent's $i$'s turn in iteration 2.
    However, because such an update occurs after agent $i$ selected $a_i$, it must be that $U_j(a^2_i,a^{2,j}_{-j})=0$ meaning it could not have been a strict best response as required by the algorithm.
    Therefore, such a selection is not possible and assumption $U_i(a^2)=0$ is contradicted by $U_i(a^2_i,a^{2}_{-i})=v_{a^2_i}$, which follows from the above argument.
\end{proof}

\subsection*{Proof of Theorem~\ref{thm:SWBRP}}
\begin{proof}
    Let $\bar{g}\in \bar{G}$ satisfy Assumption A1 and $a^0\in \textrm{NE}(\bar{g})$ have $|R(a^0)|<n$.
    By A1, it must be that there exists $a^*\in A$ such that $a^*$ is informed. 
    We now consider a  sequential procedure such that agents update their action from $a^0$ to $a^*$.
    The update process proceeds in rounds, where $\rho(t)\subseteq N$ agents update their action at time $t$. 
    Particularly, each agent $i\in \rho(t)$ updates their action from $a_i$ to $a^*_i$.
    Then, in the next round, the agents that update are given by $\rho(t+1)=\{i\in N : a^0_i=a^*_j\textrm{ for }j\in \rho(t)\}$.
    The process begins with $\rho(1)=\{s\}$ for any $s\in S(a^0)$ ($|R(a^0)|<n$ ensures $S(a^0)$ is non-empty) and terminates when at timestep $T$ when $d\in \rho(T)$ for any $d\in D(a^0)$.
    The weak best response sequence can then be created by taking the sequence of agents that led to $d \in \rho(T)$, particularly it must be some agent $i$ selected $a_i=a^0_d$. Further this agent is unique by definition of $a^*$.
    This process can be repeated iteratively, until a sequence of agents $(i^1,i^2,\dots i^T)$ is recovered.
    Then a weak best response sequence can be constructed by taking $a^t=(a^*_{i^t},a^{t-1}_{-i^t})$, creating sequence $(a^0,a^1,\dots a^{T-1})$ up to the second to last agent. For the last update by agent, $i^T=d$ a max is taken
    $$a^T_d\in \argmax_{a_d\in A_d} U(a_d,a^{T-1}_{-d}).$$
    This creates a weak best response sequence, because only one agent updates their action at a time, by definition of the $i\in \rho(t)$ there must have been $j\in \rho(t-1)$ such that $a^t_j=a^0_i$, and each agent can only update once because $a^*$ is informed.
    The definition of $S(a^0),C(a^0)$ ensure that all updates in $a^0\dots a^{T-1}$ were weak best responses, and the final update in $a^T$ is by definition a best response.
    
    To show this process always terminates within $T\leq n$ (particularly verifying $T$ is finite), we show the property that $|\rho(t+1)|\geq |\rho(t)|$.
    This property follows because for any $i\in \rho(t)$ for $t<T$ it must be that $|a^0_{-i}|_{a_i^*}\geq 1$  by definition that $i\in S(a^0)\cup C(a^0)$.
    This implies that when $i$ selects $a^*_i$, at least one agent will be placed in $\rho(t+1)$ as a result, which implies $|\rho(t+1)|\geq |\rho(t)|$.
    Then, to see $T\leq n$, we use the fact that each update can only update once and $|\rho(t)|\geq |\rho(1)|= 1$, meaning it will take at most $n$ steps. 
    
    We now show $R(a^0)\subset R(a^T)$ and $a^T\in \textrm{NE}(\bar{g})$.
    Beginning with we observe anytime an agent $i$ updated their action from $a_i$ it must have been that agent $j$ had $a^*_j=a_i$, therefore if $r\in R(a^t)$ it must be $r\in R(a^{t+1})$.
    The subset $R(a^1)\subset R(a^T)$ is strict because on the last step it is guaranteed that $a^T_d\notin R(a^{T-1})$, therefore $R(a^{T})=R(a^{T-1})\cup\{a^T_d\}$ giving the desired subset property.
    Finally, we show $a^T$ is a Nash equilibrium.
    By \eqref{eq:NE}, we must show that $U_i(a'_i,a^T_{-i})\geq U_i(a^T_i,a^T_{-i})$ for any action $a'_i\in A_i$ for every agent $i$.
    To show this, suppose that agent $i$ updates there action at time $t<T$, and consider agent $i$'s utility $U_i(a_i^t,a_{-i}^{t})$.
    We make a few observations about this function: first $a_i^t=a^{T}_i$ as agent $i$ will never update their action.
    Second, $U_i(a_i',a_{-i}^{t})=U_i(a_i',a_{-i}^{T})$ for all actions $a'_i\in A_i\setminus\{a^T_i\}$, follows from $R(a^{t})\subset R(a^T)$.
    Because $i\in S(a)\cup C(a)$ it must be $U_i(a_i',a^T_{-i})=U(a'_i,a^t)=0$ for all $a'_i\in A_i\setminus\{a^T_i\}$
    Therefore, because  $U_i(a^T_i,a^T_{-i})\geq 0$ it must be $a^T_i$'s Nash inequality is satisfied.
    Finally, when $t=T$ and agent $d$ updates their action, it is defined a best response so the Nash inequality \eqref{eq:NE} holds for agent $d$, which concludes that $a^T$ must be a Nash equilibrium.
\end{proof}

\vspace{-1cm}
\begin{IEEEbiography}
[\raggedbottom{\includegraphics[width=1in,height=1.25in,clip,keepaspectratio]{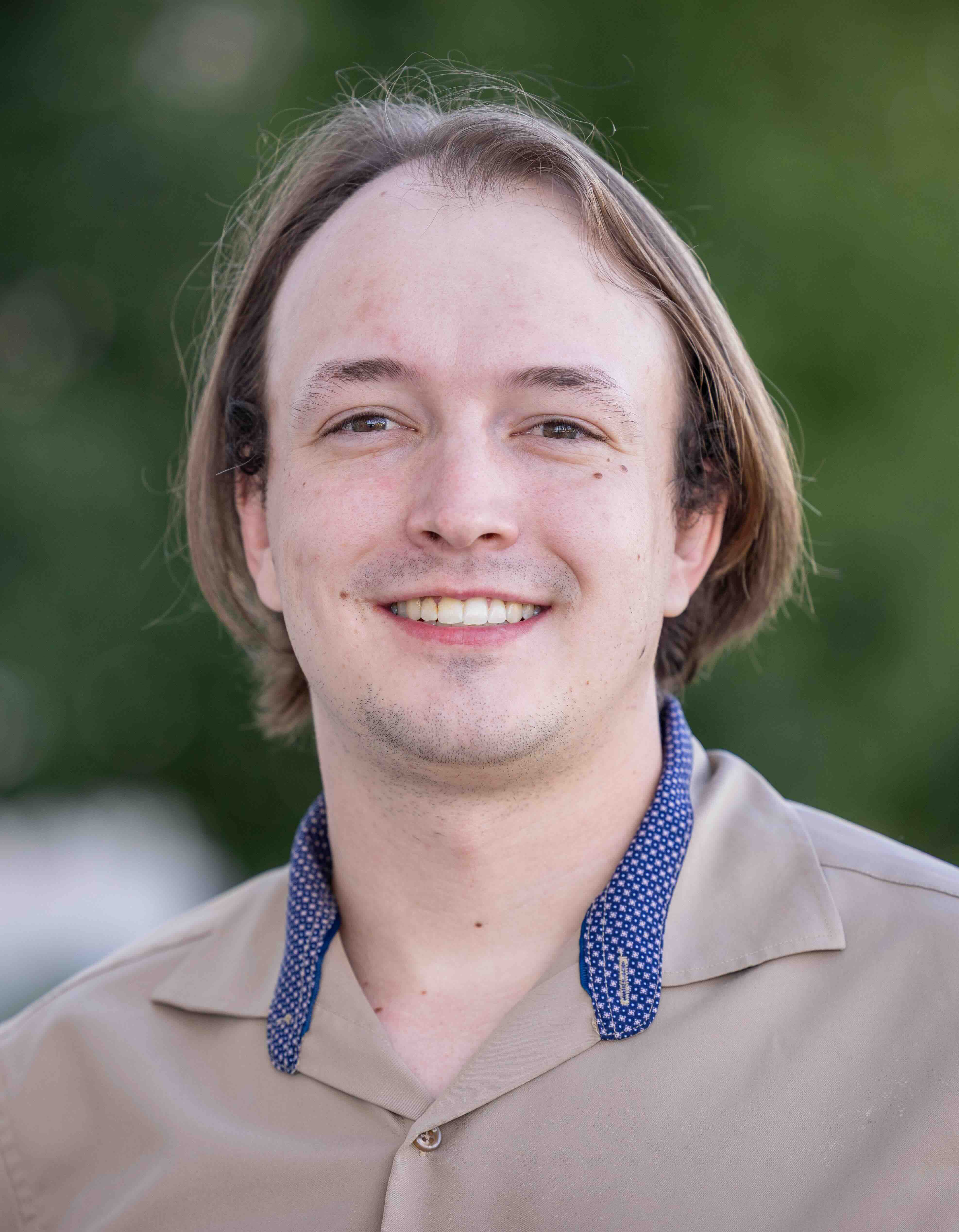}}]{Brandon Collins}(S'19-M'25) is postdoctoral fellow at the Thayer School of Engineering at Dartmouth College.
He received his B.S. in Computer Science and his Ph.D. in 2025, both from the University of Colorado Colorado Springs.
His Ph.D. was advised by Dr. Philip N. Brown and Dr. Shouhuai Xu, and received the outstanding Ph.D. student award.
He is currently participating in the ASEE eFellows program under the mentorship of Dr. Bryce Ferguson.
His research interests include multi-agent systems and formal decision science for cybersecurity.

\end{IEEEbiography}
\vspace{-1cm}

\begin{IEEEbiography}[{\includegraphics[width=1in,height=1.25in,clip,keepaspectratio]{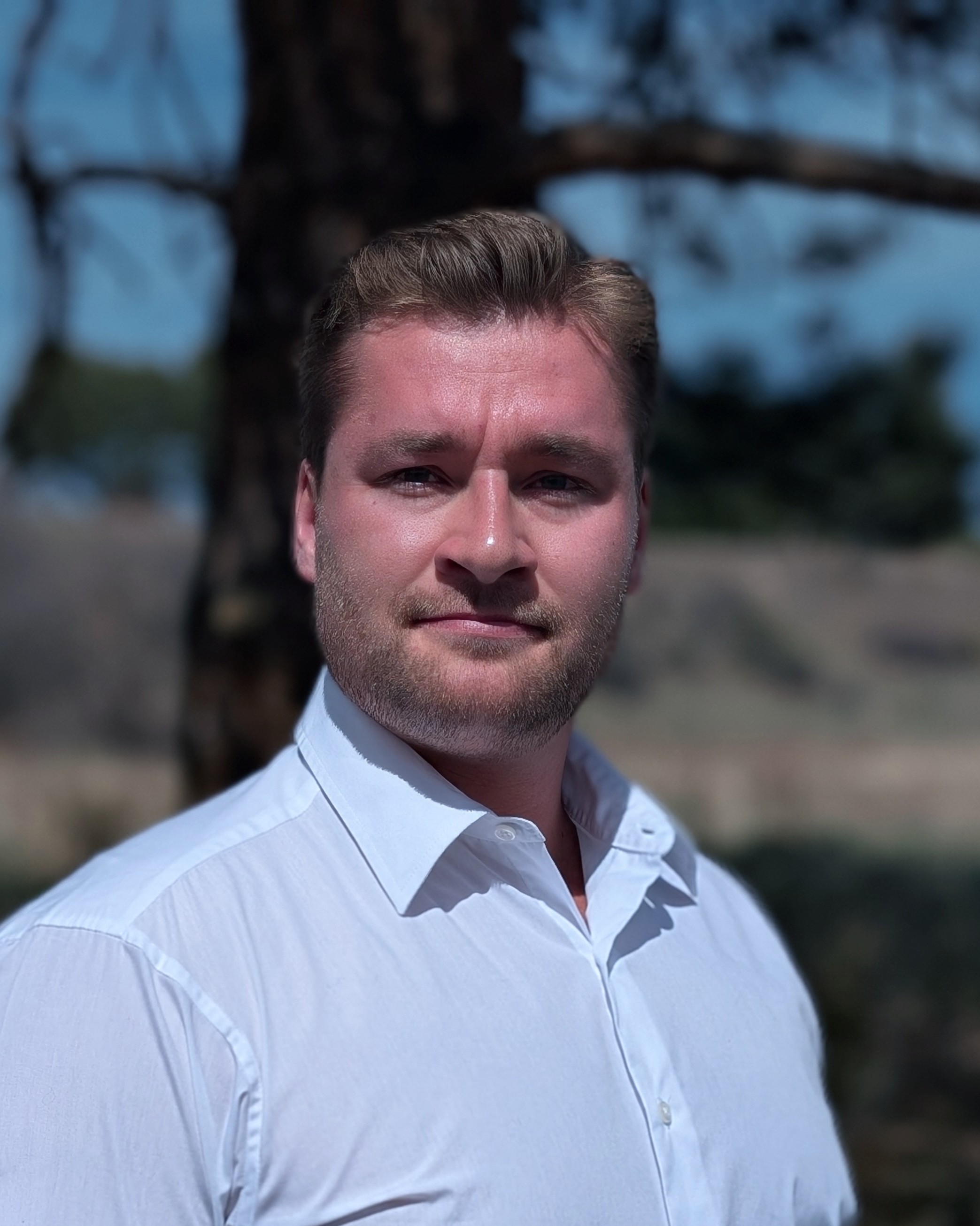}}]{Colton Hill}
is a Postdoctoral Research Associate at the University of Colorado Colorado Springs, where he received the Bachelor of Science in Mathematics in 2020 and the Ph.D. in Computer Science in 2026. His research interests include game theory and multi-agent systems, with an emphasis on efficiency, stability, incentive design, and coupled behavioral–environmental dynamics.

\end{IEEEbiography}

\begin{IEEEbiography}[{\includegraphics[width=1in,height=1.25in,clip,keepaspectratio]{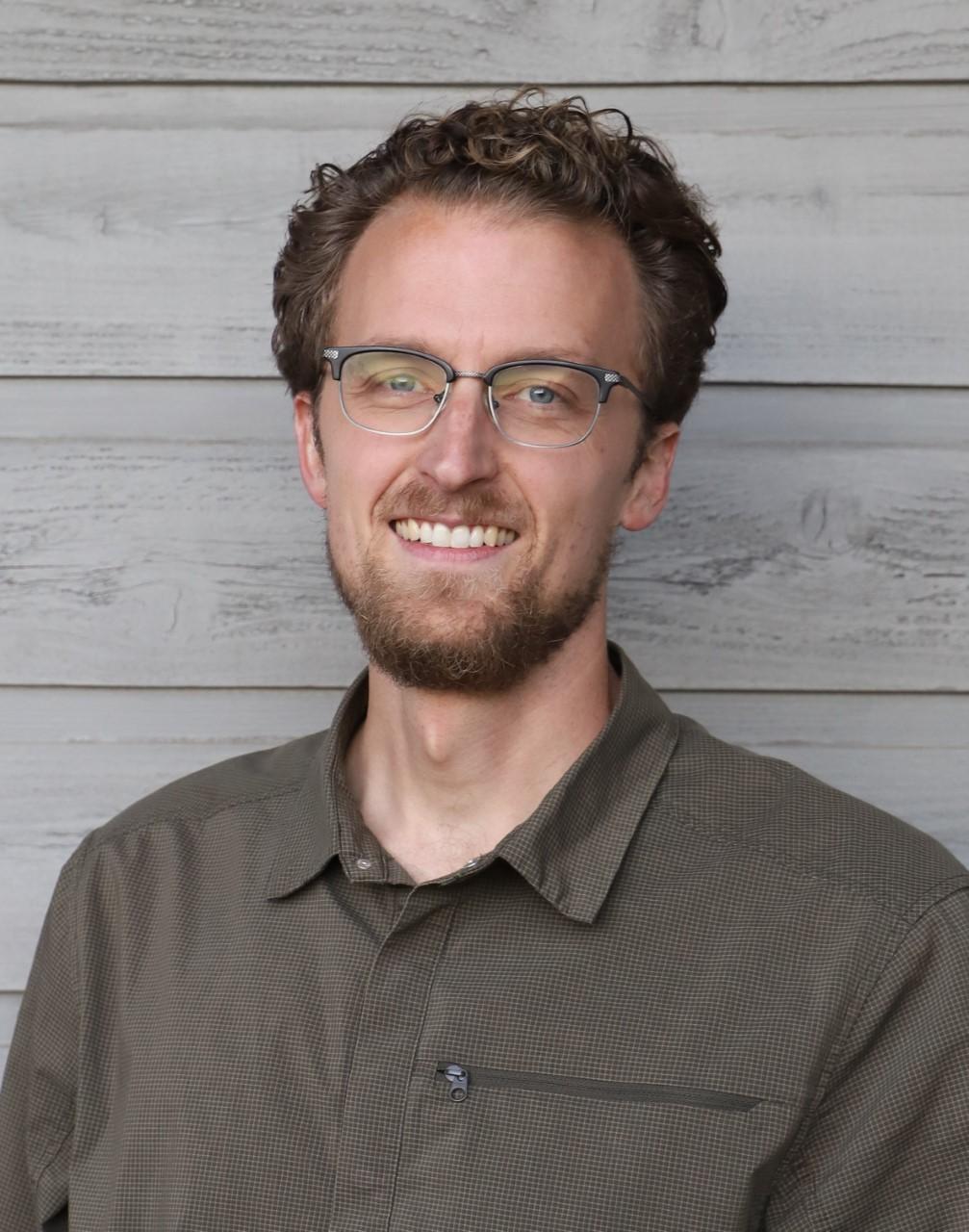}}]{Philip N. Brown} (Member) is Associate Professor in the Department of Computer Science at the University of Colorado Colorado Springs and was a Visiting Professor at Politecnico di Torino in 2026. Philip received the Bachelor of Science in Electrical Engineering in 2007 from Georgia Tech, the Master of Science in Electrical Engineering in 2015 from the University of Colorado at Boulder, and the PhD in Electrical and Computer Engineering from the University of California, Santa Barbara under the supervision of Jason R. Marden. He has received early-career awards from the Air Force Office of Scientific Research (YIP, 2023), National Science Foundation (CAREER, 2025), and the Army Research Office (ECP, 2025). Philip is interested in complex strategic interactions in networked systems of agents, and studies these with a combination of game theory and machine learning. Key application areas include smart transportation systems, human-machine teaming, and space domain awareness.
\end{IEEEbiography}

\end{document}